\newtheorem{theorem}{\bf{Theorem}}
\newtheorem{proposition}{\bf{Proposition}}
\newtheorem{definition}{\bf{Definition}}
\newtheorem{remark}{\bf{Remark}}
\begin{document}

\title{Polar-Coded Non-Orthogonal Multiple Access}

\author{Jincheng~Dai,~\IEEEmembership{Student Member,~IEEE},
        Kai~Niu,~\IEEEmembership{Member,~IEEE}, Zhongwei~Si,~\IEEEmembership{Member,~IEEE}, Chao~Dong,~\IEEEmembership{Member,~IEEE} and Jiaru Lin,~\IEEEmembership{Member,~IEEE}
\thanks{This work is supported by the National Natural Science Foundation of China (No. 61671080 \& No. 61171099), National High-tech R\&D Program (863 Program) (No. 2015AA01A709), BUPT-SICE Excellent Graduate Students Innovation Fund and Huawei HIRP project (No. HIRPO20140101). This paper was partially presented at the 2016 IEEE International Symposium on Information Theory (ISIT) \cite{Dai_polar_coded_NOMA}.}
\thanks{The authors are with the Key Laboratory of Universal Wireless Communications, Ministry of Education, Beijing University of Posts and Telecommunications, Beijing 100876, China (email: \{daijincheng, niukai, sizhongwei, dongchao, jrlin\}@bupt.edu.cn).}
\vspace{-1em}
}

\maketitle

\begin{abstract}
Non-orthogonal multiple access (NOMA) is one of the key techniques to address the high spectral efficiency and massive connectivity requirements for the fifth generation (5G) wireless system. To efficiently realize NOMA, we propose a joint design framework combining the polar coding and the NOMA transmission, which deeply mines the generalized polarization effect among the users. In this polar coded NOMA (PC-NOMA) framework, the original NOMA channel is decomposed into multiple bit polarized channels by using a three-stage channel transform, that is, user$\to$signal$\to$bit partitions. Specifically, for the first-stage channel transform, we design two schemes, namely sequential user partition (SUP) and parallel user partition (PUP). For the SUP, a joint successive cancellation detecting and decoding scheme is developed, and a search algorithm is proposed to schedule the NOMA detecting order which improves the system performance by enhanced polarization among the user synthesized channels. The ``worst-goes-first'' idea is employed in the scheduling strategy, and its theoretic performance is analyzed by using the polarization principle. For the PUP, a corresponding parallel detecting scheme is exploited to reduce the latency. The block error ratio performances over the additive white Gaussian noise channel and the Rayleigh fading channel indicate that the proposed PC-NOMA obviously outperforms the state-of-the-art turbo coded NOMA scheme due to the advantages of joint design between the polar coding and NOMA.
\end{abstract}

\begin{IEEEkeywords}
NOMA, polar codes, three-stage channel transform, polar scheduling.
\end{IEEEkeywords}

\IEEEpeerreviewmaketitle

\section{Introduction}

\subsection{Related Works}

\IEEEPARstart{M}{ultiple} access (MA) technique was regarded as the landmark of each generation of mobile communication systems from the first generation (1G) to the 4G. Recently, as a major candidate of multiple access technique in the 5G mobile communication system, non-orthogonal multiple access (NOMA) \cite{NOMA_original_comm_magazine} has been proposed to increase the system throughput and accommodate massive communication connections. Among the available NOMA techniques, sparse code multiple access (SCMA) \cite{SCMA_original_PIMRC} and pattern division multiple access (PDMA) \cite{PDMA_original_TVT} are the two typical code-domain NOMA schemes. In the transmitting end of SCMA, each user's bit stream is directly mapped into a sparse codeword, and multiple users share the same orthogonal physical resource elements (PREs), e.g. OFDM subcarriers. SCMA codebooks can flexibly adapt to various system requirements and support codeword overloading when the number of multiplexed users exceeds that of orthogonal resources. This is the key that SCMA can support massive connections and high spectral efficiency in the 5G system. Like the low-density parity-check (LDPC) codes, SCMA can be represented by a regular sparse factor graph with the variable nodes (VNs) representing users and the function nodes (FNs) representing PREs. For PDMA, its multiplexing manner in the code-domain is similar to that in SCMA; however, the number of edges connected to each user in the factor graph may be different \cite{PDMA_original_TVT}.

Among various types of SCMA and PDMA receivers, the successive interference cancellation (SIC) detector carries out detection and cancellation user by user. It performs well when the received signal-to-interference-and-noise-ratio (SINR) of each user is significantly different. However, it suffers from a performance loss when the SINR difference is not obvious among these users, in which case the error propagation may occur \cite{PDMA_original_TVT}. Moreover, when there exist correlations among the user access channels, the performance of SIC receiver will also degrade. Take the advantage of the sparsity, message passing algorithm (MPA) can be used in the factor graph and achieve a near-optimal performance \cite{SCMA_original_PIMRC,PDMA_original_TVT}. In addition, it is not sensitive to the SINR difference at the receiver and more robust to the channel correlations among users compared to the SIC receiver. Hence, MPA is considered as the primary multiuser detection scheme for SCMA and PDMA.

Turbo code is adopted in many existing NOMA systems for the error correction as the turbo coded NOMA (TC-NOMA) in the subsequent descriptions \cite{SCMA_original_PIMRC}. Furthermore, followed by the turbo principle in \cite{Wangxiaodong_turbo_iterative}, the turbo decoding and NOMA detecting can be jointly processed to develop a outer-loop iterative multiuser receiver, which significantly enhances the system performance \cite{PDMA_original_TVT,SCMA_outloop_iterations}. This essentially shows the potential advantages in the joint design of channel coding and NOMA transmission.

Polar codes proposed by Ar{\i}kan \cite{Arikan} are the first constructive codes (as opposed to random codes) that provably achieve the symmetric capacity of binary-input memoryless channels (BMCs). This capacity-achieving code family uses a technique called channel polarization. Shortly after the polar code was put forward, the channel polarization phenomenon has been rapidly found to be universal in many other signal processing problems, such as source coding, information secrecy and other scenarios. Particularly, in order to improve the spectral efficiency, Seidl \emph{et al.} have introduced a $2^m$-ary polar coded modulation scheme \cite{Polar_coded_modulation_seidl}. By considering the dependencies between the bits which are mapped to a single modulation symbol as a special kind of channel transform, the polar coded modulation (PCM) scheme is derived under the framework of two-stage channel transform. Note that with regard to the PCM, this two-stage channel transform concatenated manner accommodates a joint design of polar coding and modulation which allows one to describe the two techniques in a unified context rather than a simple combination.

\subsection{Motivation}

Note that the reliabilities of access users will demonstrate obvious differentia in the NOMA transmission. Especially for the PDMA, this phenomenon is elaborated as the ``disparate diversity order'' \cite{PDMA_original_TVT}. From the perspective of polarization, this reliability distinction among the users can be viewed as a generalized polarization effect. Therefore, the investigation of the generalized polarization in NOMA channel will be a key to improve the performance of NOMA system.

Guided by the generalized polarization idea, we design the framework of polar coded NOMA (PC-NOMA) system. The novelty of PC-NOMA is to allow for a joint optimization of the binary polar coding, the signal modulation and the NOMA transmission. Through a multi-stage channel transform concatenated manner, the polarization effect can be gradually enhanced. Finally, the NOMA channel will be elaborately split into a group of BMCs, whose capacities are trend to zero and one. Under this joint design framework, we analyze the polarization behavior of the NOMA channel and investigate the key features in the NOMA transceiver so as to achieve a significant performance improvement in NOMA system.

\subsection{Summary of Contributions}

The highlights of our contributions in this paper can be summarized as the following four aspects:

\noindent{\emph{1) Three-Stage Channel Transform Structure}}

Guided by the generalized polarization idea, we propose the three-stage channel transform concatenated manner, which constructs the framework of PC-NOMA. Specifically, the whole structure is described as user$\to$signal$\to$bit partitions and we focus on the first-stage design, i.e., the user partition, where the partition order and the partition structure will affect the polarization effect. In the second stage, the bit-interleaved code modulation (BICM) scheme is used to combat the channel fading. Then by performing the binary channel polarization transform, the bit synthesized channels are split into a series of bit polarized channels in the third stage. Essentially, the proposed three-stage channel transform structure facilitates the unified description of polar coding and NOMA transmission in a joint manner.

\noindent{\emph{2) Two Schemes in User Partition}}

We design two schemes, namely sequential user partition (SUP) and parallel user partition (PUP) for the first-stage channel transform. The SUP is designed from the perspective of information theory and follows the chain rule of mutual information, which aims for the highest possible spectrum efficiency. It sequentially decomposes the NOMA channel into a group of correlated user synthesized channels. For the PUP, the original NOMA channel will be parallel decomposed into a set of independent user synthesized channels, which aims to reduce the processing latency.

\noindent{\emph{3) JSC Receiver for The SUP Based PC-NOMA}}

For the SUP based PC-NOMA, we design a joint successive cancellation (JSC) detecting and decoding scheme. Since the partition order is related to the NOMA detecting order, we thus establish a fundamental theorem to obtain the optimal detecting order, named polar scheduling. On contrast to the traditional scheduling in MIMO or multiuser detections \cite{D_Tse_wireless_communications} which adopts the ``best-goes-first'' idea, the proposed polar scheduling strategy employs the ``worst-goes-first'' principle to promote the polarization effects among the user synthesized channels. Specifically, we provide the closed-form analytical solutions for the two-user case. For the general case with multiple users, we prove necessary optimality conditions for the polar scheduling. Guided by these necessary optimality conditions, we then derive a low complexity search algorithm for a near-optimal polar scheduling solution.

\noindent{\emph{4) PSC Receiver for The PUP Based PC-NOMA}}

Another concern in PC-NOMA design is the processing latency. We adjust the partition structure in the first stage and propose the PUP based PC-NOMA scheme. Correspondingly, we design a parallel and successive cancellation (PSC) hybrid detecting and decoding scheme in the receiver, whereby each user's detected messages are parallel sent to their associated polar decoders for further interference cancellation.

The remainder of the paper is organized as follows. Section II describes the notation conventions and the system model of NOMA transmission. Section III presents the framework of three-stage channel transform under the sequential and parallel user partitions, which can be seen as a joint design of the conventional binary channel polarization, the signal modulation and the NOMA transmission. The polar scheduling strategy in SUP is also discussed in this section. Then, guided by above two PC-NOMA design frameworks, the transceiver design of the proposed PC-NOMA schemes will be described in Section IV. Next, Section V evaluates the performance of the proposed PC-NOMA schemes under the additive white Gaussian noise (AWGN) channel and the Rayleigh fading channel. Finally, Section VI concludes this paper.

\section{Notations and System Model}

\subsection{Notation Conventions}

In this paper, we use calligraphic characters, such as ${\cal X}$, to denote sets. Let $\left| \cal X \right|$ denote the cardinality of the set $\cal X$. We write lowercase letters (e.g., $x$) to denote scalars. We use notation ${\bf{x}}$ to denote a vector and $x_i$ to denote the $i$-th element in ${\bf{x}}$. The set of binary, real and complex numbers are denoted by $\mathbb{B}$, $\mathbb{R}$ and $\mathbb{C}$, respectively. The bold letters, such as ${\bf{X}}$, denote matrices, and ${{\bf{X}}_{i,j}}$ is the element at the $i$-th row and the $j$-th column of matrix $\bf{X}$. For a diagonal matrix $diag\left( {\bf x} \right)$, its $i$-th diagonal element is $x_i$. The notations ${{\bf{X}}^T}$ and ${{\bf{x}}^T}$ stand for the transpose of matrix $\bf{X}$ and vector ${\bf{x}}$, respectively.

Specially, we use ${\cal N}({a},{b})$ to denote a Gaussian distribution with the mean $a$ and the variance $b$. Let ${\cal CN}({\bf a},{\bf B})$ stand for a complex Gaussian distribution, where $\bf a$ and $\bf B$ represent the mean vector and the covariance matrix, respectively. In addition, we will also use the uppercase letter (e.g., $Y$) to denote random variable and the lowercase letter $y$ to represent a realization. Furthermore, let the uppercase sans-serif letter (e.g., $\sf Y$) denote random vector and the lowercase bold letter $\bf y$ to represent a realization vector.

Throughout this paper, $\log \left(  \cdot  \right)$ means ``logarithm to base 2'', and $\ln \left(  \cdot  \right)$ stands for the ``natural logarithm to base $e$'', where the constant $e = 2.71828\cdots$.

\subsection{NOMA System Model}

Consider an uplink NOMA system including $V$ users, where each user has been assigned a different codebook. For each $v$-th user, the input block of the error-correcting encoder ${\bf u}_v$ includes $K_v$ bits. After the interleaving, the $N$-bits coded block is denoted by ${\bf c}_v$. In fading channels, the interleaver is indeed crucial to guarantee that consecutive coded bits affected by independent fades. Additionally, the $v$-th user's code rate is defined as ${R_v} = \frac{K_v}{N}$, and the overall code rate can be defined as $R = \frac{1}{V}\sum\nolimits_{v = 1}^V {{R_v}} $.

For the $v$-th user, by using the signal mapper, every $J$ coded bits in ${\bf c}_v$ form a vector
\begin{equation}\label{bitstream_map}
  \begin{aligned}
{{\bf{b}}_v} & = ({c_{v,(t - 1)J + 1}},{c_{v,(t - 1)J + 2}}, \cdots ,{c_{v,tJ}})\\
~ & = ({b_{v,1}},{b_{v,2}}, \cdots ,{b_{v,J}}),
\end{aligned}
\end{equation}
where $t = 1,2,\cdots,N/J$ denotes the time slot index, and $J$ is defined as the modulation order. Then, each vector ${\bf b}_v$ will be mapped to an $F$-dimensional codeword ${{\bf{x}}_v} = ({x_{v,1}},{x_{v,2}}, \cdots ,{x_{v,F}})$ by an NOMA mapper. We use ${{\bf{x}}_v^{(j)}}$ to denote the bit mapping from the $j$-th bit $b_{v,j}$ in ${\bf b}_v$. The number of nonzero elements in each codeword ${{\bf{x}}_v}$ is far less than $F$, namely, ${{\bf{x}}_v}$ is a sparse vector. The corresponding mapping function for the $v$-th user is defined as ${g_v}:{\mathbb{B}^J} \mapsto {{\cal X}_v}$, where ${{{\cal X}_v}} \in {\mathbb{C}^F}$ denotes the $v$-th user's codebook and $\left| {{{\cal X}_v}} \right| = M$. Hence, $J = {\log}M$. All the $V$ users' codebooks form the codebook set $\{ {{{\cal X}_v}} \}$. Then all the $V$ users' codewords are multiplexed over $F$ shared PREs, e.g. OFDM subcarriers. In the NOMA systems, we usually have $V > F$. Hence, overloading can be implemented by NOMA with a large number of users, which enables massive connectivity in 5G wireless systems. We define the system overloading factor (SOF) as $\eta  = V/F$ \cite{SCMA_original_PIMRC}.

The whole structure of NOMA can be represented by a binary $F \times V$ matrix $\bf{F}$, and the corresponding factor graph is denoted by ${\cal G}\left( {{\cal V},{\cal F}} \right)$, which contains $V$ variable nodes (VNs) associated with the users, and $F$ function nodes (FNs) associated with the PREs. We also use $F \times V$ to denote the NOMA system configuration. VN $v$ and FN $f$ are connected if and only if ${{\bf{F}}_{f,v}} = 1$. The set of VNs connected to FN $f$ is defined as ${{\cal V}_f} = \left\{ {v\left| {{{\bf{F}}_{f,v}} = 1} \right.} \right\}$ for $\forall f$, and ${d_f} = \left| {{{\cal V}_f}} \right|$ denotes the degree of FN $f$. Similarly, the set of FNs connected to VN $v$ is given as ${{\cal F}_v} = \left\{ {f\left| {{{\bf{F}}_{f,v}} = 1} \right.} \right\}$ for $\forall v$, and ${d_v} = \left| {{{\cal F}_v}} \right|$ denotes the degree of VN $v$.

\emph{Example 1:} Fig. \ref{factor_graph}(a) depicts the factor graph of a $V = 6$ user SCMA system, where the number of PREs $F = 4$ and SOF $\eta  = 150\%$. Its factor graph matrix is
\begin{equation}\label{6x4_factor_graph_matrix}
{\bf{F}} = \left[ {\begin{array}{*{20}{c}}
0&1&1&0&1&0\\
1&0&1&0&0&1\\
0&1&0&1&0&1\\
1&0&0&1&1&0
\end{array}} \right],
\end{equation}
where each VN's degree is ${d_v} = 2$ and each FN's degree is ${d_f} = 3$.

\emph{Example 2:} Fig. \ref{factor_graph}(b) depicts the factor graph of a $V = 3$ user PDMA system, where the number of PREs $F = 2$ and SOF $\eta  = 150\%$. Its factor graph matrix is
\begin{equation}\label{3x2_factor_graph_matrix}
{\bf{F}} = \left[ {\begin{array}{*{20}{c}}
1&1&0\\
1&0&1\\
\end{array}} \right],
\end{equation}
where each FN's degree is ${d_f} = 2$. However, each VN's degree is different, i.e., ${d_v}$ equals $1$ or $2$, which is the main difference between PDMA and SCMA.

\begin{figure}[htbp]
\setlength{\abovecaptionskip}{0.cm}
\setlength{\belowcaptionskip}{-0.cm}
  \centering{\includegraphics[scale=0.75]{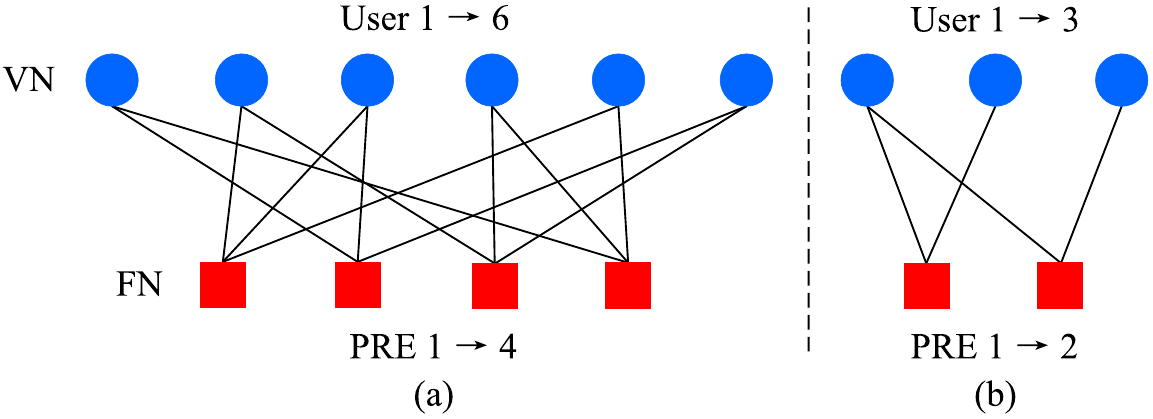}}
  \caption{The factor graph of NOMA systems. Figure (a) corresponds to a SCMA system with $V = 6$ users and $F = 4$ PREs. Figure (b) corresponds to a PDMA system with $V = 3$ users and $F = 2$ PREs.}\label{factor_graph}
\end{figure}

In the uplink code-domain multiplexing NOMA system, the received signal vector ${\bf{y}} = \left( {{y_1},{y_2}, \cdots ,{y_F}} \right)$ can be represented as
\begin{equation}\label{NOMA_rec_vec}
  {{\bf{y}}^T} = \sum\limits_{v = 1}^V {diag\left( {{{\bf{h}}_v}} \right){\bf{x}}_v^T}  + {{\bf{z}}^T},
\end{equation}
where ${{\bf{h}}_v} = ( {{h_{v,1}},{h_{v,2}}, \cdots ,{h_{v,F}}} )$ is the channel gain vector of the $v$-th user and ${\bf{z}}{{}\sim{}}{\cal C}{\cal N}\left( {{\bf 0},{N_0}{\bf{I}}} \right)$ is an AWGN vector with the mean vector $\bf 0$ and covariance matrix ${N_0}{\bf{I}}$, and $N_0$ denotes the variance of Gaussian noise. After receiving $\bf{y}$, multiuser detecting and channel decoding are employed to retrieve each user's information bits ${\bf{\hat{u}}}_v$. These generalized ``decoding'' operations can be accomplished in either a separately concatenated manner \cite{SCMA_original_PIMRC} or a jointly combined manner \cite{SCMA_outloop_iterations}.

\section{Channel Transform}

In this section, the concept of channel polarization transform is extended to NOMA transmission. The three-stage channel transform of PC-NOMA is proposed, which is distinguished by the sequential user partition (SUP) and the parallel user partition (PUP), respectively. We discuss the polar scheduling strategy under the SUP case, which follows the ``worst-goes-first'' idea. For the case of $V = 2$, we provide the explicitly optimal scheduling strategy. Then, when $V > 2$, the necessary conditions for the optimal scheduling are derived, which can efficiently reduce the optimal solution search complexity.

\subsection{Three-Stage Channel Transform under SUP}

For the code-domain multiplexing NOMA, we assume an ideal channel estimation in the receiver, i.e., the channel gain vector ${\bf{h}} = \left( {{{\bf{h}}_1},{{\bf{h}}_2}, \cdots ,{{\bf{h}}_V}} \right)$ is known by the receiver. The NOMA channel is denoted by $W:{{\cal X}} \mapsto {{\cal Y}}$, where $\cal X$ is set of transmitted vector ${\bf{x}} = \left( {{{\bf{x}}_1},{{\bf{x}}_2}, \cdots ,{{\bf{x}}_V}} \right)$ with $\left| {{{\cal X}}} \right| = {2^{JV}}$, and $\cal Y$ is the set of received vector $\bf{y}$. Given each user's codebook, every $JV$ bits vector $\bf{b}$ are mapped into a transmitted vector ${\bf{x}} \in {\cal X}$ under a specific one-to-one mapping called NOMA labeling
\begin{equation}\label{NOMA_labeling}
  L:{\bf{b}} = \left( {{{\bf{b}}_1},{{\bf{b}}_2}, \cdots ,{\bf{b}}{}_V} \right) \in {{\mathbb{B}}^{JV}} \mapsto {{\bf{x}} \in \cal X},
\end{equation}
where $L$ depends on each user's mapping function $g_v$. Hence, the NOMA channel $W$ is equivalently defined as $W:{ {\mathbb{B}}^{JV}} \mapsto {\cal Y}$, whose transition probability is
\begin{equation}\label{NOMA_transition_probabilities}
  W\left( {{\bf{y}}\left| {\bf b},{\bf h} \right.} \right) = W\left( {{\bf{y}}\left| {{L^{ - 1}}\left( {\bf{x}} \right)},{\bf h} \right.} \right),
\end{equation}
where $L^{-1}$ is the inverse mapping of $L$.

\begin{figure}[t]
\vspace{0.4em}
\setlength{\abovecaptionskip}{0.cm}
\setlength{\belowcaptionskip}{-0.cm}
  \centering{\includegraphics[scale=0.74]{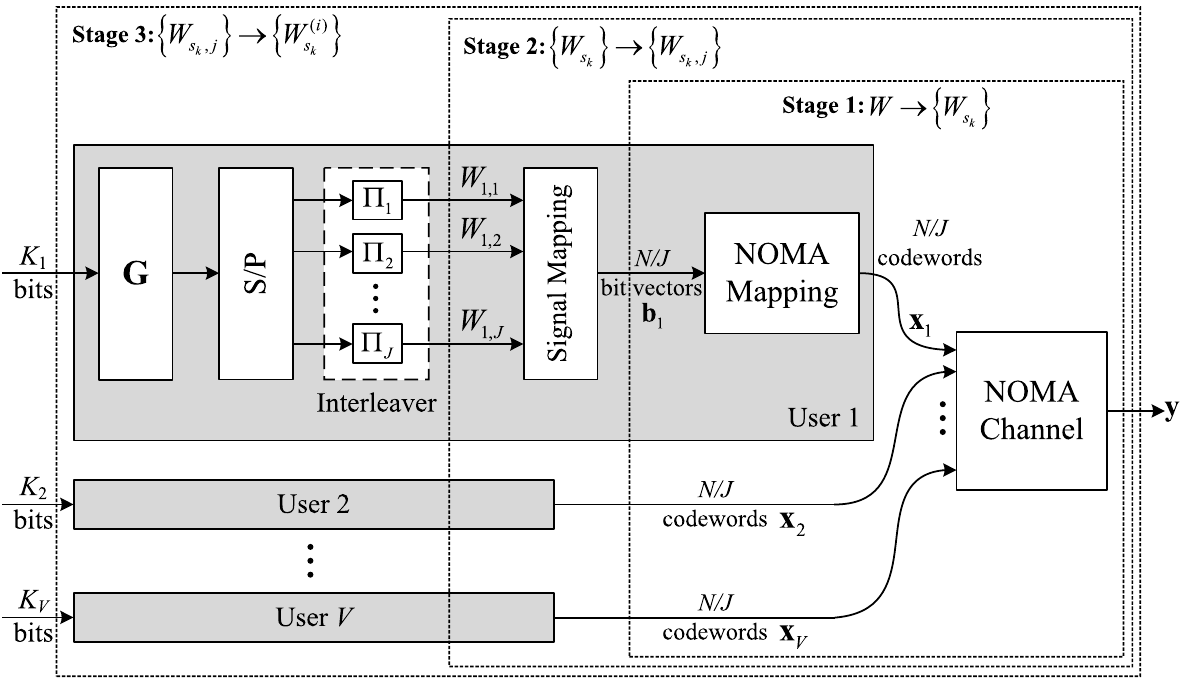}}
  \caption{Three-stage channel transform of PC-NOMA.}\label{SUP_three_stage_channel_transform_figure}
\vspace{-1em}
\end{figure}

For PC-NOMA, the NOMA channel $W$ is decomposed into a series of bit polarized channels under the three-stage channel transform. Let the user partition order be written as $\pi  = \{ {s_1},{s_2}, \cdots ,{s_V}\} $, and $\pi$ also corresponds to the detecting order in the receiver, where the ${s_1}$-th user is detected first, the ${s_2}$-th user is detected second, etc. The three-stage channel transform procedure can be written as
\begin{equation}\label{SUP_three_stage_channel_transform_procedure}
  W \to \left\{ {{W_{{s_k}}}} \right\} \to \left\{ {{W_{{s_k},j}}} \right\} \to \left\{ {W_{{s_k}}^{\left( i \right)}} \right\},
\end{equation}
which is depicted in Fig. \ref{SUP_three_stage_channel_transform_figure}. Based on the SUP scheme, the first stage channel transform is closely related to that of multilevel coding (MLC) in the PCM on a conceptual level. Then using the BICM approach, with the help of interleavers ${\Pi _j}$, $j = 1,2,\cdots,J$, the signal partition is performed in the second stage. Finally, the bit polarization transform is employed in the third stage. On the whole, this three-stage channel transform constructs a MLC/BICM hybrid structure.

\noindent{\emph{Stage 1: Sequential user partition}}

Given the user partition order $\pi  = \{ {s_1},{s_2}, \cdots ,{s_V}\} $, with regard to the chain rule of mutual information \cite{Cover_elemnts_of_information_theory}, the average mutual information (AMI) between $\bf b$ and $\bf y$ is written as
\begin{equation}\label{NOMA_channel_chain_rule}
  \begin{aligned}
    & I\left( {{{\sf B}_1},{{\sf B}_2}, \cdots ,{{\sf B}_V};{\sf Y}\left| {\sf H} \right.} \right) =  \\
    & \sum\limits_{k = 1}^V {{I_{{s_{k}}}}} = \sum\limits_{k = 1}^V {I\left( {{{\sf B}_{{s_{k}}}};{\sf Y}\left| {{{\sf B}_{{s_1}}},{{\sf B}_{{s_2}}}, \cdots ,{{\sf B}_{{s_{k - 1}}}},{\sf H}} \right.} \right)},
    \end{aligned}
\end{equation}
where $\sf Y$ and $\sf H$ represent the random vectors corresponding to $\bf y$ and $\bf h$, respectively. Additionally, ${\sf B}_{s_{k}}$ denotes the random vector associated with ${\bf b}_{s_{k}}$.

Note that each user's AMI ${I_{{s_{k}}}}$ can be formulated as
\begin{equation}\label{NOMA_mutual_information_transform_SUP}
  \begin{aligned}
    {I_{{s_{k}}}} = & I\left( {{{\sf B}_{{s_{k}}}};{\sf Y}\left| {{{\sf B}_{{s_1}}},{{\sf B}_{{s_2}}}, \cdots ,{{\sf B}_{{s_{k - 1}}}},{\sf H}} \right.} \right)\\
     = & H\left( {{{\sf B}_{{s_{k}}}}\left| {{{\sf B}_{{s_1}}},{{\sf B}_{{s_2}}}, \cdots ,{{\sf B}_{{s_{k - 1}}}},{\sf H}} \right.} \right) - \\
     & H\left( {{{\sf B}_{{s_{k}}}}\left| {{{\sf B}_{{s_1}}},{{\sf B}_{{s_2}}}, \cdots ,{{\sf B}_{{s_{k - 1}}}},{\sf Y},{\sf H}} \right.} \right) \\
     = & H\left( {{{\sf B}_{{s_{k}}}}\left| {\sf H} \right.} \right) - H\left( {{{\sf B}_{{s_{k}}}}\left| {{{\sf B}_{{s_1}}},{{\sf B}_{{s_2}}}, \cdots ,{{\sf B}_{{s_{k - 1}}}},{\sf Y},{\sf H}} \right.} \right)\\
     = & I\left( {{{\sf B}_{{s_{k}}}};{{\sf B}_{{s_1}}},{{\sf B}_{{s_2}}}, \cdots ,{{\sf B}_{{s_{k - 1}}}},{\sf Y}\left| {\sf H} \right.} \right),
    \end{aligned}
\end{equation}
where $H\left(  \cdot  \right)$ stands for the entropy function. We attribute the chain of equalities in Eq. (\ref{NOMA_mutual_information_transform_SUP}) to the fact that each user's transmitting bits are independent with each other. So given the user partition order $\pi$, the NOMA transmission process can be seen as a kind of channel transform. Eq. (\ref{NOMA_mutual_information_transform_SUP}) indeed indicates that each user's data stream is detected in a SIC manner at the receiver, then total $V$ correlated channels ${{W}_{s_k}} : {{\mathbb{B}}^J} \mapsto {\cal Y} \times {{\mathbb{B}}^{J\left( {k - 1} \right)}}$ are obtained, whose transition probability function is written as
\begin{equation}\label{transition_prob_SUP_user_synthesized_channel}
  \begin{aligned}
    & {W_{{s_k}}}\left( {{\bf{y}},{{\bf{b}}_{{s_1}}},{{\bf{b}}_{{s_2}}}, \cdots ,{{\bf{b}}_{{s_{k - 1}}}} \left| {{{\bf{b}}_{{s_k}}},{\bf{h}}} \right.} \right) \\
    & = \sum\limits_{\left( {{{\bf{b}}_{{s_{k + 1}}}}, \cdots ,{{\bf{b}}_{{s_V}}}} \right) \in {{\mathbb B}^{J\left( {V - k} \right)}}} {\left( {\frac{1}{{{2^{J\left( {V - 1} \right)}}}} \cdot W\left( {{\bf{y}}\left| {\bf{b}},{\bf h} \right.} \right)} \right)}.
    \end{aligned}
\end{equation}
Assuming the input binary vectors are equiprobable, Eq. (\ref{NOMA_mutual_information_transform_SUP}) is therefore referred as the symmetric capacity of ${W_{{s_k}}}$. Note that for any detecting order $\pi$, the AMI of NOMA channel $W$ is preserved, i.e.,
\begin{equation}\label{NOMA_channel_capacity_sum}
  \begin{aligned}
    I\left( W \right) = \sum\limits_{k = 1}^V {I\left( {{W_{{s_k}}}} \right)} & = \sum\limits_{k = 1}^V {{I_{{s_k}}}} \\
    ~ & = I\left( {{{\sf B}_1},{{\sf B}_2}, \cdots ,{{\sf B}_V};{\sf Y}\left| {\sf H} \right.} \right),
    \end{aligned}
\end{equation}
which follows the property of AMI chain rule \cite{Cover_elemnts_of_information_theory}, and the function $I\left(  \cdot  \right)$ denotes the symmetric capacity.

\begin{definition}
  \emph{An order-$V$ \emph{sequential user partition} ($V$-SUP) of NOMA channel $W$ is defined as
    \begin{equation}\label{SUP_channel_transform}
      W \to \left\{ {{W_{{s_1}}},{W_{{s_2}}}, \cdots ,{W_{{s_V}}}} \right\},
    \end{equation}
  which maps $W$ to an ordered set of $V$ binary vector input channels ${W_{{s_k}}}$ with $k=1,2,\cdots,V$, which are referred as the \emph{user synthesized channels}.}
\end{definition}

For a given NOMA channel $W$, the $V$-SUP is characterized by the user partition order $\pi$, and the number of all possible partition orders equals $\left( V! \right)$.

\noindent{\emph{Stage 2: Signal partition}}

In the second-stage transform, each user synthesized channel ${{W}_{s_k}}$ will be further transformed into a set of binary input channels $\{ {W_{{s_k},j}}\} $ with $j = 1,2,\cdots,J$, which are referred as the \emph{bit synthesized channels}.

Similar to channel transform scheme in \cite{Polar_coded_modulation_seidl}, we use BICM scheme in the signal partition. In fading channels, this BICM is indeed crucial to guarantee that consecutive coded bits affected by independent fades. Note that
\begin{equation}\label{SUP_signal_partition_capacity_equation}
  \begin{aligned}
    & I\left( {{W_{{s_k}}}} \right) = {I_{{s_k}}}\\
    & = \sum\limits_{j = 1}^J {I\left( {{B_{{s_k},j}};{\sf Y}\left| {{{\sf B}_{{s_1}}}, \cdots ,{{\sf B}_{{s_{k - 1}}}},{B_{{s_k},1}}, \cdots ,{B_{{s_k},j - 1}},{\sf H}} \right.} \right)},
    \end{aligned}
\end{equation}
where ${{B_{{s_k},j}}}$ denotes the random variable corresponding to ${{b_{{s_k},j}}}$ which is the $j$-th bit of ${\bf b}_{s_k}$ with $j = 1,2,\cdots,J$. By assuming the input bits of $W_{s_k}$ are uniformly distributed, each $2^J$-ary input vector channel under the interleaver is equivalent to a group of $J$ parallel independent BMCs with transition probabilities
\begin{equation}\label{transition_prob_SUP_bit_synthesized_channel}
  \begin{aligned}
     & {W_{{s_k},j}}\left( {{\bf{y}},{{\bf{b}}_{{s_1}}},{{\bf{b}}_{{s_2}}}, \cdots ,{{\bf{b}}_{{s_{k - 1}}}}\left| {{b_{{s_k},j}} = b,{\bf{h}}} \right.} \right) = \\
     & \sum\limits_{\begin{smallmatrix}
     {{\bf{b}}_{{s_k}}} \in {{\mathbb B}^J},\\
     {b_{{s_k},j}} = b
     \end{smallmatrix}} {\left( {\frac{1}{{{2^{J - 1}}}} \cdot {W_{{s_k}}}\left( {{\bf{y}},{{\bf{b}}_{{s_1}}},{{\bf{b}}_{{s_2}}}, \cdots ,{{\bf{b}}_{{s_{k - 1}}}}\left| {{{\bf{b}}_{{s_k}}},{\bf{h}}} \right.} \right)} \right)},
    \end{aligned}
\end{equation}
where $b \in {\mathbb B}$. According to Eq. (\ref{transition_prob_SUP_bit_synthesized_channel}), we have
\begin{equation}\label{SUP_signal_partition_capacity_inequation}
  \begin{aligned}
    & \sum\limits_{j = 1}^J {I\left( {{W_{{s_k},j}}} \right)} = \sum\limits_{j = 1}^J {I\left( {{B_{{s_k},j}};{\sf Y}\left| {{{\sf B}_{{s_1}}}, \cdots ,{{\sf B}_{{s_{k - 1}}}},{\sf H}} \right.} \right)}\\
    & \le \sum\limits_{j = 1}^J {I\left( {{B_{{s_k},j}};{\sf Y}\left| {{{\sf B}_{{s_1}}}, \cdots ,{{\sf B}_{{s_{k - 1}}}},{B_{{s_k},1}}, \cdots ,{B_{{s_k},j - 1}},{\sf H}} \right.} \right)}. \\
    \end{aligned}
\end{equation}

Essentially, Eq. (\ref{transition_prob_SUP_user_synthesized_channel}) and Eq. (\ref{transition_prob_SUP_bit_synthesized_channel}) transform the original NOMA channel $W$ into a set of $JV$ binary input channels $\{ W _{{s_v},j} \}$. Then, combining Eq. (\ref{SUP_signal_partition_capacity_equation}) and Eq. (\ref{SUP_signal_partition_capacity_inequation}), from the perspective of information theory, we have
\begin{equation}\label{SUP_second_stage_capacity_inequality}
  I\left( W \right) = \sum\limits_{k = 1}^V {I\left( {{W_{{s_k}}}} \right)}  \ge \sum\limits_{k = 1}^V {\sum\limits_{j = 1}^J {I\left( {{W_{{s_k},j}}} \right)} },
\end{equation}
which indicates that the BICM structure in the second stage channel transform brings some capacity loss when it enhances the robustness of PC-NOMA.

\noindent{\emph{Stage 3: Bit partition}}

In the third stage, the channel transform is named as the \emph{bit partition}, and the resulting BMCs $\{ W_{s_k}^{\left( i \right)}\} $ are referred as the \emph{bit polarized channels}. Similar to the bit polarization in the second stage described in \cite{Polar_coded_modulation_seidl}, for each user, by performing the binary-input channel polarization $\bf{G}$ on $N/J$ uses of the resulting $J$ parallel BMCs $\{ W _{{s_k},j}\}$, a total of $NV$ polarized BMCs $\{ W_{s_k}^{\left( i \right)}\} $ will be obtained, where $k = 1,2, \cdots ,V$ and $i = 1,2, \cdots ,N$.

\subsection{Polar Scheduling Strategy}

For the PC-NOMA, the second and third stage channel transforms are of fixed structures. However, for the SUP in the first stage, different partition orders $\pi$ will lead to various distributions of $\{ W_{s_k}\} $, which further affect the polarization effect of $\{ W_{s_k}^{\left( i \right)}\} $. To describe the distribution of $\{ W_{s_k}\} $, we focus on two properties of $\{ W_{s_k}\} $, i.e., the mean and the variance of the user synthesized channel capacities. Given the NOMA channel $W$ and the partition order $\pi$, they are defined respectively as
\begin{equation}\label{SUP_mean_definition}
  {M_\pi }\left( W \right) = \frac{1}{V}\sum\limits_{k = 1}^V {I\left( {{W_{{s_k}}}} \right)}  = \frac{1}{V}I\left( W \right),
\end{equation}
\begin{equation}\label{SUP_variance_definition}
  {V_\pi }\left( W \right) = \frac{1}{V}\sum\limits_{k = 1}^V {I^2{{\left( {{W_{{s_k}}}} \right)}}}  - {M_\pi^2}{\left( W \right)},
\end{equation}
where the functions ${M_\pi }\left( \cdot \right)$ and ${V_\pi }\left( \cdot \right)$ denote the mean and the variance, respectively. Eq. (\ref{SUP_mean_definition}) indicates that the mean ${M_\pi }\left( W \right)$ just relies on the original NOMA channel $W$ and it is partition order independent, which makes the variance ${V_\pi }\left( W \right)$ become an effective criterion for evaluating the polarization effect among these user synthesized channels $\{ W_{s_k}\} $.

Given the NOMA channel $W$, the optimal partition order maximizes the instantaneous variance ${V_\pi }\left( \cdot \right)$, that is,
\begin{equation}\label{polar_sheduling_optimize_target}
  {\pi ^*} = \mathop {\arg \max }\limits_\pi  {V_\pi }\left( W \right).
\end{equation}
Unfortunately, the exact analysis of ${I\left( {{W_{{s_k}}}} \right)}$ is actually a computationally intensive task. However, recall that the partition order is equivalent to the SIC detecting order in the receiver. In addition, we note that ${I\left( {{W_{{s_k}}}} \right)}$ is proportional to its detecting signal-to-interference-and-noise-ratio (SINR). Therefore, each user's detecting SINR under the SIC manner will be employed as an alternate to derive the optimal partition order.

Given detecting order $\pi  = \{ {s_1},{s_2}, \cdots ,{s_V}\} $, the indices set corresponding to the users having been detected and decoded can be defined as
\begin{equation}\label{detected_set_definition}
  {{\cal D}_k} = \left\{ {{s_1},{s_2}, \cdots ,{s_k}} \right\},
\end{equation}
where $k$ stands for detecting level in the SIC receiver with $k = 1,2,\cdots,V$, and we have ${{\cal D}_0} = \varnothing $. We use ${{\cal D}_k^c}$ to denote the complementary set of ${{\cal D}_k}$ with respect to the universal set ${\cal V} = \left\{ {1,2, \cdots ,V} \right\}$. When ${{\cal D}_k}$ has been determined, the \emph{pruned factor graph} is obtained by removing all the VNs $v \in {{\cal D}_k}$ and their associated edges from the original factor graph ${\cal G}\left( {{\cal V},{\cal F}} \right)$. Then, the corresponding factor graph matrix is denoted by ${{\bf{F}}^{\left( {{{\cal D}_k}} \right)}}$, which is generated by setting these columns with indices $v \in {{\cal D}_k}$ as all-zero columns. For instance, when ${{\cal D}_2} = \left\{ {1,2} \right\}$, the factor graph matrix in Eq. (\ref{6x4_factor_graph_matrix}) is modified as
\begin{equation}
  {{\bf{F}}^{\left( {{{\cal D}_2}} \right)}} = \left[ {\begin{array}{*{20}{c}}
    0&0&1&0&1&0\\
    0&0&1&0&0&1\\
    0&0&0&1&0&1\\
    0&0&0&1&1&0
    \end{array}} \right].
\end{equation}
Given ${{\bf{F}}^{\left( {{{\cal D}_k}} \right)}}$, its degrees of FN and VN are denoted by $d_f^{\left( {{{\cal D}_k}} \right)}$ and $d_v^{\left( {{{\cal D}_k}} \right)}$, respectively. Apparently, for any $v \in {{\cal D}_k}$, we have $d_v^{\left( {{{\cal D}_k}} \right)} = 0$. For any $1 \le k \le V$, the degree of FN will be less than the original one, that is, $d_f^{\left( {{{\cal D}_k}} \right)} \le {d_f}$.

Given each user's transmitting power $P$, their instantaneous SINR in each detecting level is defined as follows.

\begin{definition}\label{SINR_definition}
\emph{
  After ${\cal D}_k$ has been determined, the undetected user's \emph{instantaneous SINR} is written as
  \begin{equation}\label{SINR_definition_formula}
  \begin{aligned}
    & \gamma _v^{\left( {{{\cal D}_k}} \right)} = \\
    & \sum\limits_{f \in {{\cal F}_v}} {\frac{P}{{\left( {d_f^{\left( {{{\cal D}_k}} \right)} - 1} \right)P + {N_0}}}} = \sum\limits_{f \in {{\cal F}_v}} {\frac{1}{{\left( {d_f^{\left( {{{\cal D}_k}} \right)} - 1} \right) + \lambda }}},
  \end{aligned}
  \end{equation}
  where $v \in {{\cal D}_k^c}$ and the constant $\lambda  = \frac{{{N_0}}}{P}$.
  }
\end{definition}

When ${{\cal D}_{k-1}}$ has been determined, we need to choose one user from indices $v \in {{\cal D}_{k-1}^c}$ to detect, which is denoted by $s_k$. Its \emph{detecting SINR} is written as
\begin{equation}
  {\gamma _{{s_k}}} = \gamma _{{s_k}}^{\left( {{{\cal D}_{k - 1}}} \right)}.
\end{equation}
After completing above SIC detecting procedure, the detecting SINR sequence ${\cal T} = \{ {\gamma _{{s_1}}},{\gamma _{{s_2}}}, \cdots ,{\gamma _{{s_V}}}\} $ is obtained. We are also interested in the mean and the variance over $\cal T$, which are defined respectively as
\begin{equation}\label{SINR_mean_definition}
  {M_\pi }\left( \cal T \right) = \frac{1}{V}\sum\limits_{k = 1}^V {{\gamma _{{s_k}}}}  = \frac{1}{V}\sum\limits_{f = 1}^F {\sum\limits_{t = 1}^{{d_f}} {\frac{1}{{\left( {{d_f} - t} \right) + \lambda }}} },
\end{equation}
\begin{equation}\label{SINR_variance_definition}
  {V_\pi }\left( \cal T \right) = \frac{1}{V}\sum\limits_{k = 1}^V {\gamma _{{s_k}}^2}  - {M_\pi^2 }\left( \cal T \right) \propto \sum\limits_{k = 1}^V {\gamma _{{s_k}}^2} = {\Theta _\pi }.
\end{equation}
Clearly, according to Eq. (\ref{SINR_mean_definition}), the mean of $\cal T$ depends only on the original factor graph structure, rather than the particular detecting order $\pi$. This property is consistent with the analysis in Eq. (\ref{SUP_mean_definition}), which verifies that the detecting SINR can act as a rational alternate with respect to the capacity analysis. The optimal detecting order maximizes the instantaneous variance ${V_\pi }\left( \cal T \right)$, that is,
\begin{equation}\label{SINR_optimal_target}
  {\pi ^*} = \mathop {\arg \max }\limits_\pi  {V_\pi }\left( \cal T \right) = \mathop {\arg \max }\limits_\pi {\Theta _\pi }.
\end{equation}

When the number of users is $V = 2$, the optimal detecting order is as follows.

\begin{proposition}\label{Proposition_SINR_polar_scheduling_two_users}
\emph{
  The optimal partition (or detecting) order in the two-user PC-NOMA system is ``worst-goes-first'', which is expressed as
  \begin{equation}\label{SINR_polar_scheduling_two_users_formula}
    {\pi ^*} = \mathop {\arg \max }\limits_\pi  {\Theta _\pi }  = \left\{ {1,2} \right\}{\rm{~\text{iff}~}}\gamma _{{1}}^{\left( {{{\cal D}_0}} \right)} \le \gamma _{{2}}^{\left( {{{\cal D}_0}} \right)}.
  \end{equation}
  }
\end{proposition}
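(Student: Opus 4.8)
The plan is to exploit the fact that for $V=2$ there are only $2!=2$ admissible orders, $\pi=\{1,2\}$ and $\pi=\{2,1\}$, so the optimization in Eq.~(\ref{SINR_optimal_target}) collapses to a single pairwise comparison of $\Theta_{\{1,2\}}$ and $\Theta_{\{2,1\}}$. The structural observation I would invoke first is the order-invariance of the mean established in Eq.~(\ref{SINR_mean_definition}): the sum $S=\gamma_{s_1}+\gamma_{s_2}$ takes the \emph{same} value for both orders. Since $\Theta_\pi=\gamma_{s_1}^2+\gamma_{s_2}^2=S^2-2\gamma_{s_1}\gamma_{s_2}$ with $S$ fixed, maximizing $\Theta_\pi$ is equivalent to minimizing the product $\gamma_{s_1}\gamma_{s_2}$. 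This reduces the entire claim to one inequality between two products.

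Next I would write out the four relevant SINRs explicitly from Definition~\ref{SINR_definition}. With only two users every FN connects to a subset of $\{1,2\}$, so I would partition the FNs into those shared by both users (say $a$ of them, each of degree $d_f=2$) and those private to user $1$ or user $2$ (say $m_1$ and $m_2$ of them, each of degree $d_f=1$). At level ${\cal D}_0$ this gives $\gamma_1^{({\cal D}_0)}=\frac{a}{1+\lambda}+\frac{m_1}{\lambda}$ and $\gamma_2^{({\cal D}_0)}=\frac{a}{1+\lambda}+\frac{m_2}{\lambda}$; once the first user is cancelled each shared FN drops from degree $2$ to degree $1$, yielding $\gamma_2^{(\{1\})}=\frac{a+m_2}{\lambda}$ and $\gamma_1^{(\{2\})}=\frac{a+m_1}{\lambda}$. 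As a consistency check, these four quantities satisfy $\gamma_1^{({\cal D}_0)}+\gamma_2^{(\{1\})}=\gamma_2^{({\cal D}_0)}+\gamma_1^{(\{2\})}=S$, confirming the order-invariance of the mean.

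I would then compute the difference of the two products directly and factor it. A short calculation gives
\[
\gamma_1^{({\cal D}_0)}\gamma_2^{(\{1\})}-\gamma_2^{({\cal D}_0)}\gamma_1^{(\{2\})}=\frac{a\,(m_1-m_2)}{\lambda^2(1+\lambda)}.
\]
Because $a\ge 0$ and $\lambda>0$, the sign of this difference is governed entirely by $m_1-m_2$, which in turn has the same sign as $\gamma_1^{({\cal D}_0)}-\gamma_2^{({\cal D}_0)}=\frac{m_1-m_2}{\lambda}$. Hence the product for $\{1,2\}$ is the smaller one, equivalently $\Theta_{\{1,2\}}$ is the larger one, exactly when $\gamma_1^{({\cal D}_0)}\le\gamma_2^{({\cal D}_0)}$, i.e.\ when the user with the weaker initial SINR is scheduled first. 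This is precisely the stated ``worst-goes-first'' rule and establishes the iff in Eq.~(\ref{SINR_polar_scheduling_two_users_formula}).

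The reduction via the fixed-sum identity is the conceptual crux; after it the argument is routine. The only point demanding care is the factor-graph bookkeeping—correctly tracking that cancelling one user lowers each shared FN's degree from $2$ to $1$, and thus changes its interference-plus-noise denominator from $1+\lambda$ to $\lambda$—since a slip there would flip the sign of the final comparison. I would also note the degenerate case $a=0$ (no shared resources, hence no interference), in which the two orders tie, consistent with the non-strict inequality stated in the proposition.
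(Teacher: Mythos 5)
Your proof is correct, but it takes a different route from the paper's. The paper does not reduce to a product comparison: it writes the second-detected user's SINR as its level-$0$ SINR plus a boost $\Delta=\sum_{f\in{\cal F}_1\cap{\cal F}_2}\bigl(\tfrac{1}{(d_f-2)+\lambda}-\tfrac{1}{(d_f-1)+\lambda}\bigr)$, observes that this $\Delta$ is \emph{the same} for both orders, and then compares $\bigl(\gamma_1^{({\cal D}_0)}\bigr)^2+\bigl(\gamma_2^{({\cal D}_0)}+\Delta\bigr)^2$ against $\bigl(\gamma_2^{({\cal D}_0)}\bigr)^2+\bigl(\gamma_1^{({\cal D}_0)}+\Delta\bigr)^2$ directly, the difference collapsing to $2\Delta\bigl(\gamma_2^{({\cal D}_0)}-\gamma_1^{({\cal D}_0)}\bigr)$. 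Your fixed-sum identity $\Theta_\pi=S^2-2\gamma_{s_1}\gamma_{s_2}$ is a clean alternative that makes explicit why maximizing the variance is equivalent to minimizing the product, and your explicit $(a,m_1,m_2)$ bookkeeping verifies the order-invariance of $S$ rather than citing Eq.~(\ref{SINR_mean_definition}). Two remarks on what each approach buys. First, the paper's $\Delta$-form is the one recycled verbatim in the proof of Theorem~\ref{Theroem_SINR_polar_scheduling_multiple_users}, where the swapped pair sits inside a pruned graph with arbitrary FN degrees; your parametrization with degrees restricted to $\{1,2\}$ is specific to the standalone two-user system and would have to be restated in terms of a general $\Delta$ to support that extension (the fixed-sum reduction itself does generalize, since the swapped pair's SINR sum is order-invariant there too). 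Second, your explicit treatment of the degenerate case $a=0$ is actually slightly more careful than the paper, whose claim $\Delta>0$ in Eq.~(\ref{proposition_proof_3}) silently assumes ${\cal F}_1\cap{\cal F}_2\neq\varnothing$; when the intersection is empty both orders tie and the non-strict ``iff'' still holds, as you note.
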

\begin{proof}
  Given two detecting orders ${\pi _1} = \left\{ {1,2} \right\}$ and ${\pi _2} = \left\{ {2,1} \right\}$, we first prove the necessity of the condition, which corresponds to the ``only if'' part in ``iff''. If ${\pi _1}$ is adopted, we have
  \begin{equation}\label{proposition_proof_1}
    {\gamma _1} = \gamma _1^{\left( {{{\cal D}_0}} \right)} = \sum\limits_{f \in {{\cal F}_1}} {\frac{1}{{\left( {d_f^{\left( {{{\cal D}_0}} \right)} - 1} \right) + \lambda }}},
  \end{equation}
  where ${d_f^{\left( {{D_0}} \right)}}$ indeed equals the original $d_f$ due to ${{\cal D}_0} = \varnothing $. In addition, note that
  \begin{equation}\label{proposition_proof_2}
    \begin{aligned}
    & {\gamma _2} = \gamma _2^{\left( {{{\cal D}_1}} \right)} = \sum\limits_{f \in {{\cal F}_2}} {\frac{1}{{\left( {d_f^{\left( {{{\cal D}_{1}}} \right)} - 1} \right) + \lambda }}} \\
    & = \sum\limits_{f \in {{\cal F}_2} \cap {{\cal F}_1}} {\frac{1}{{\left( {d_f^{\left( {{{\cal D}_0}} \right)} - 2} \right) + \lambda }}}  + \sum\limits_{f \in {{\cal F}_2} - {{\cal F}_1}} {\frac{1}{{\left( {d_f^{\left( {{{\cal D}_0}} \right)} - 1} \right) + \lambda }}} \\
    & = \sum\limits_{f \in {{\cal F}_2}} {\frac{1}{{\left( {d_f^{\left( {{{\cal D}_0}} \right)} - 1} \right) + \lambda }}}  + \Delta  = \gamma _2^{\left( {{{\cal D}_0}} \right)} + \Delta,
    \end{aligned}
  \end{equation}
  where $\Delta$ is written as
  \begin{equation}\label{proposition_proof_3}
    \Delta  = \sum\limits_{f \in {{\cal F}_2} \cap {{\cal F}_1}} {\left( {\frac{1}{{\left( {d_f^{\left( {{{\cal D}_0}} \right)} - 2} \right) + \lambda }} - \frac{1}{{\left( {d_f^{\left( {{{\cal D}_0}} \right)} - 1} \right) + \lambda }}} \right)} > 0.
  \end{equation}
  Therefore, we can conclude that
  \begin{equation}\label{proposition_proof_SINR_sum_pi1}
    {\Theta _{{\pi _1}}} = {\left( {\gamma _1} \right)^2} + {\left( {\gamma _2} \right)^2} = {\left( {\gamma _1^{\left( {{{\cal D}_0}} \right)}} \right)^2} + {\left( {\gamma _2^{\left( {{{\cal D}_0}} \right)} + \Delta } \right)^2}.
  \end{equation}
  Next, the detecting order is configured as ${\pi}_2$. In the first detecting level, we have
  \begin{equation}\label{proposition_proof_4}
    {\gamma _2} = \gamma _2^{\left( {{{\cal D}_0}} \right)} = \sum\limits_{f \in {{\cal F}_2}} {\frac{1}{{\left( {d_f^{\left( {{{\cal D}_0}} \right)} - 1} \right) + \lambda }}}.
  \end{equation}
  In the second detecting level, it can be derived that
  \begin{equation}\label{proposition_proof_5}
    \begin{aligned}
    & {\gamma _1} = \gamma _1^{\left( {{{\cal D}_1}} \right)} = \sum\limits_{f \in {{\cal F}_1}} {\frac{1}{{\left( {d_f^{\left( {{{\cal D}_{1}}} \right)} - 1} \right) + \lambda }}} \\
    & = \sum\limits_{f \in {{\cal F}_1} \cap {{\cal F}_2}} {\frac{1}{{\left( {d_f^{\left( {{{\cal D}_0}} \right)} - 2} \right) + \lambda }}}  + \sum\limits_{f \in {{\cal F}_1} - {{\cal F}_2}} {\frac{1}{{\left( {d_f^{\left( {{{\cal D}_0}} \right)} - 1} \right) + \lambda }}} \\
    & = \sum\limits_{f \in {{\cal F}_1}} {\frac{1}{{\left( {d_f^{\left( {{{\cal D}_0}} \right)} - 1} \right) + \lambda }}}  + \Delta  = \gamma _1^{\left( {{{\cal D}_0}} \right)} + \Delta,
    \end{aligned}
  \end{equation}
  where $\Delta$ is the same as that in Eq. (\ref{proposition_proof_3}). Additionally, we can also conclude that
  \begin{equation}\label{proposition_proof_SINR_sum_pi2}
    {\Theta _{{\pi _2}}} = {\left( {\gamma _2} \right)^2} + {\left( {\gamma _1} \right)^2} = {\left( {\gamma _2^{\left( {{{\cal D}_0}} \right)}} \right)^2} + {\left( {\gamma _1^{\left( {{{\cal D}_0}} \right)} + \Delta } \right)^2}.
  \end{equation}

  Since the optimal detecting order is assumed as ${\pi ^*} = \left\{ {1,2} \right\}$, we can derive the following inequalities
  \begin{equation}\label{proposition_inequalities}
    \begin{aligned}
     & {\Theta _{{\pi _1}}} \ge {\Theta _{{\pi _2}}}\\
     \Rightarrow & {\left( {\gamma _1^{\left( {{{\cal D}_0}} \right)}} \right)^2} + {\left( {\gamma _2^{\left( {{{\cal D}_0}} \right)} + \Delta } \right)^2} \ge {\left( {\gamma _2^{\left( {{{\cal D}_0}} \right)}} \right)^2} + {\left( {\gamma _1^{\left( {{{\cal D}_0}} \right)} + \Delta } \right)^2}\\
     \Rightarrow & \gamma _2^{\left( {{{\cal D}_0}} \right)} \ge \gamma _1^{\left( {{{\cal D}_0}} \right)},
    \end{aligned}
  \end{equation}
  which follows Eq. (\ref{proposition_proof_SINR_sum_pi1}) and Eq. (\ref{proposition_proof_SINR_sum_pi2}). The chain of inequalities in Eq. (\ref{proposition_inequalities}) indeed proves the the necessity of the condition in \emph{Proposition \ref{Proposition_SINR_polar_scheduling_two_users}}. Apparently, we observe that the identical chain of inequalities holds on in the opposite direction. Therefore, the sufficiency of the condition in \emph{Proposition \ref{Proposition_SINR_polar_scheduling_two_users}} can also be trivially proved.
\end{proof}

\emph{Proposition \ref{Proposition_SINR_polar_scheduling_two_users}} indicates a \emph{polar scheduling} strategy. It makes the reliability difference between ``bad'' user and ``good'' user become more significant, which essentially brings in enhanced polarization effect among the user synthesized channels. This scheduling strategy obeys the \emph{polarization principle} in the design of PC-NOMA, i.e., maximizing the variance of user synthesized channel capacities by the choice of detecting order $\pi$. Furthermore, from the proof process, it can be concluded that this scheduling strategy is signal-to-noise-ratio (SNR) independent, since it only relies on the NOMA superposition manner, i.e., the factor graph structure.

In general cases with $V > 2$, the following necessary optimality conditions can be formulated.

\begin{theorem}\label{Theroem_SINR_polar_scheduling_multiple_users}
\emph{
  For the case with $V > 2$, the optimal detecting order ${\pi ^*} = \left\{ {{s_1},{s_2}, \cdots ,{s_V}} \right\}$ of the PC-NOMA system must satisfy the following necessary optimality conditions:
  \begin{equation}\label{SINR_polar_scheduling_multiple_users_formula}
    \gamma _{{s_k}}^{\left( {{{\cal D}_{k - 1}}} \right)} \le \gamma _{{s_{k + 1}}}^{\left( {{{\cal D}_{k - 1}}} \right)}, \forall 1 \le k \le \left( {V - 1} \right),
  \end{equation}
  where the two SINRs are given in \emph{Definition \ref{SINR_definition}}.
  }
\end{theorem}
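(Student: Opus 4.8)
The plan is to prove the necessary conditions by an exchange argument, generalizing the two-user analysis of Proposition \ref{Proposition_SINR_polar_scheduling_two_users} to a local swap of two consecutive users in the detecting order. Suppose $\pi^* = \{s_1, s_2, \cdots, s_V\}$ is optimal and, for contradiction, that the condition $\gamma_{s_k}^{(\mathcal{D}_{k-1})} \le \gamma_{s_{k+1}}^{(\mathcal{D}_{k-1})}$ fails for some $k$, i.e., $\gamma_{s_k}^{(\mathcal{D}_{k-1})} > \gamma_{s_{k+1}}^{(\mathcal{D}_{k-1})}$. I would then construct a competing order $\pi'$ that is identical to $\pi^*$ except that positions $k$ and $k+1$ are interchanged, and show that $\Theta_{\pi'} > \Theta_{\pi^*}$, contradicting optimality. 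By Eq. (\ref{SINR_optimal_target}), this suffices.

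The key observation that makes this a purely local swap is that the detected set after level $k+1$ is the same under both orders: $\{s_1, \cdots, s_{k-1}\} \cup \{s_k, s_{k+1}\}$ is unchanged by the swap. Hence all detecting SINRs $\gamma_{s_1}, \cdots, \gamma_{s_{k-1}}$ at earlier levels and all $\gamma_{s_{k+2}}, \cdots, \gamma_{s_V}$ at later levels are identical for $\pi^*$ and $\pi'$, because each of these depends only on $\mathcal{D}_{k-1}$ or on $\mathcal{D}_{k+1}$, both of which are fixed. Therefore $\Theta_{\pi'} - \Theta_{\pi^*}$ reduces to the difference of just the two terms at levels $k$ and $k+1$, exactly as in the two-user case. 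Writing $\alpha = \gamma_{s_k}^{(\mathcal{D}_{k-1})}$ and $\beta = \gamma_{s_{k+1}}^{(\mathcal{D}_{k-1})}$ for the base SINRs relative to the common set $\mathcal{D}_{k-1}$, and letting $\Delta > 0$ denote the positive increment incurred by the second-detected user (the same cross-term as in Eq. (\ref{proposition_proof_3}), arising from the shared function nodes $\mathcal{F}_{s_k} \cap \mathcal{F}_{s_{k+1}}$ whose degrees drop after the first of the pair is cancelled), I would show
\begin{equation}
\Theta_{\pi^*}^{(k,k+1)} = \alpha^2 + (\beta + \Delta)^2, \qquad \Theta_{\pi'}^{(k,k+1)} = \beta^2 + (\alpha + \Delta)^2,
\end{equation}
so that $\Theta_{\pi'} - \Theta_{\pi^*} = 2\Delta(\alpha - \beta) > 0$ under the assumption $\alpha > \beta$, the contradiction sought.

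The main obstacle is to rigorously justify that $\Delta$ is indeed the \emph{same} positive quantity in both expressions, i.e., that the increment to the second-detected user does not depend on which of $s_k, s_{k+1}$ goes first. This requires verifying that the degree reduction on each shared function node $f \in \mathcal{F}_{s_k} \cap \mathcal{F}_{s_{k+1}}$ depends only on the unordered pair $\{s_k, s_{k+1}\}$ and on $\mathcal{D}_{k-1}$ — which holds because $d_f^{(\mathcal{D}_{k-1})}$ is common to both orders and cancelling either user first lowers that degree by exactly one before the second user's SINR is computed, matching the structure of Eq. (\ref{proposition_proof_2}) and Eq. (\ref{proposition_proof_5}). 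A secondary point to handle cleanly is that these are \emph{necessary} (not sufficient) conditions: a single violated adjacent pair already admits an improving swap, so any optimal order must satisfy all $(V-1)$ inequalities simultaneously, but satisfying them does not by itself certify global optimality since non-adjacent rearrangements can interact through intermediate detected sets. I would state this limitation explicitly to frame why a further search algorithm is still needed.
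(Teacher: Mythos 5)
Your proposal is correct and follows essentially the same route as the paper: an adjacent-swap exchange argument that reduces the comparison of $\Theta_{\pi^*}$ and the swapped order to the two terms at levels $k$ and $k+1$, then reuses the two-user computation of Proposition \ref{Proposition_SINR_polar_scheduling_two_users} with the common increment $\Delta>0$ from the shared function nodes. Your explicit justification that $\Delta$ is the same for both orderings of the pair, and your remark on necessity versus sufficiency, match what the paper asserts (the latter appearing in its Remark \ref{remark_of_theorem_1}); no gap to report.
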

\begin{proof}
  Considering two orders
  \begin{equation}
    \left\{ \begin{aligned}
    & {\pi _1} = {\pi ^*} = \left\{ {{s_1},{s_2}, \cdots ,{s_k},{s_{k + 1}}, \cdots ,{s_V}} \right\},\\
    & {\pi _2} = \left\{ {{s_1},{s_2}, \cdots ,{s_{k + 1}},{s_k}, \cdots ,{s_V}} \right\},
    \end{aligned} \right.
  \end{equation}
  where $s_k$ and $s_{k+1}$ swap their positions. Then the difference between ${\Theta _{{\pi _1}}}$ and ${\Theta _{{\pi _2}}}$ is solely determined by the contribution of $s_k$-th user and $s_{k+1}$-th user, since this position swapping does not affect the contributions of other users. That means
  \begin{equation}
    {\Theta _{{\pi _1}}} - {\gamma _{{s_k},{\pi _1}}} - {\gamma _{{s_{k + 1}},{\pi _1}}} = {\Theta _{{\pi _2}}} - {\gamma _{{s_k},{\pi _2}}} - {\gamma _{{s_{k + 1}},{\pi _2}}},
  \end{equation}
  where ${\gamma _{{s_k},{\pi _1}}}$ and ${\gamma _{{s_k},{\pi _2}}}$ denote the $\gamma _{{s_k}}$ obtained under the detecting orders ${\pi}_1$ and ${\pi}_2$, respectively, and similar definitions for ${\gamma _{{s_{k+1}},{\pi _1}}}$ and ${\gamma _{{s_{k+1}},{\pi _2}}}$. Apply \emph{Proposition \ref{Proposition_SINR_polar_scheduling_two_users}} to the ${{s_k}}$-th user and the ${{s_{k+1}}}$-th user, we can also derive the following inequalities:
  \begin{equation}\label{theorem_inequalities}
    \begin{aligned}
    & {\Theta _{{\pi _1}}} \ge {\Theta _{{\pi _2}}}\\
    \Rightarrow & {\gamma _{{s_k},{\pi _1}}} + {\gamma _{{s_{k + 1}},{\pi _1}}} \ge {\gamma _{{s_k},{\pi _2}}} + {\gamma _{{s_{k + 1}},{\pi _2}}}\\
     \Rightarrow & {\left( {\gamma _{{s_k}}^{\left( {{{\cal D}_{k - 1}}} \right)}} \right)^2} + {\left( {\gamma _{{s_{k + 1}}}^{\left( {{{\cal D}_{k - 1}}} \right)} + \Delta } \right)^2} \ge \\
     & {\left( {\gamma _{{s_{k + 1}}}^{\left( {{{\cal D}_{k - 1}}} \right)}} \right)^2} + {\left( {\gamma _{{s_k}}^{\left( {{{\cal D}_{k - 1}}} \right)} + \Delta } \right)^2}\\
     \Rightarrow & 2\Delta \gamma _{{s_{k + 1}}}^{\left( {{{\cal D}_{k - 1}}} \right)} \ge 2\Delta \gamma _{{s_k}}^{\left( {{{\cal D}_{k - 1}}} \right)} \Rightarrow \gamma _{{s_{k + 1}}}^{\left( {{{\cal D}_{k - 1}}} \right)} \ge \gamma _{{s_k}}^{\left( {{{\cal D}_{k - 1}}} \right)},
    \end{aligned}
  \end{equation}
  where $\Delta$ is written as
  \begin{equation}
  \begin{aligned}
    \Delta  = & \sum\limits_{f \in {{\cal F}_{{s_{k + 1}}}} \cap {{\cal F}_{{s_k}}}} \\
    ~ & {\left( {\frac{1}{{\left( {d_f^{\left( {{{\cal D}_{k - 1}}} \right)} - 2} \right) + \lambda }} - \frac{1}{{\left( {d_f^{\left( {{{\cal D}_{k - 1}}} \right)} - 1} \right) + \lambda }}} \right)}  > 0.
  \end{aligned}
  \end{equation}
  Thus, the necessary optimality conditions in \emph{Theorem \ref{Theroem_SINR_polar_scheduling_multiple_users}} has been proved.
\end{proof}

We want to point out that above necessary optimality conditions cannot uniquely determine the optimal solution while it will cover the suspect answers which include the optimal solution. In other words, the conditions in Eq. (\ref{SINR_polar_scheduling_multiple_users_formula}) are not sufficient for determining the optimal polar scheduling strategy. However, the following property can be derived from \emph{Theorem \ref{Theroem_SINR_polar_scheduling_multiple_users}}.

\begin{remark}\label{remark_of_theorem_1}
\emph{
  Given a detecting order $\pi$, swapping two consecutive positions $s_k$ and $s_{k+1}$ that satisfy the necessary optimality conditions will lead to a lower variance ${V_\pi }\left( \cal T \right)$ of the detecting SINR sequence $\cal T$.
  }
\end{remark}

The property in \emph{Remark \ref{remark_of_theorem_1}} can significantly reduce the trials to find the optimal PC-NOMA detecting order with respect to the brute-fore search of all possible $\left( V! \right)$ orders. In Fig. \ref{Orders_Bar}, we demonstrate the percentage of the satisfactory orders, which meet the necessary optimality conditions, and the optimal orders with maximum variance ${V_\pi }\left( \cal T \right)$ of the detecting SINR sequence in all possible $\left( V! \right)$ combinations. These two types of orders are denoted by the ``satisfactory orders in all'' and the ``optimal orders in all'', respectively. Additionally, SCMA and PDMA stand for the regular and irregular connections in the factor graph, respectively, which has been introduced in subsection I.A \cite{PDMA_original_TVT}. The $F \times V$ in Fig. \ref{Orders_Bar} represents the configuration of PREs and users in NOMA system. Clearly, the optimal order is not unique in general cases. There exist many equivalent solutions for the optimal order. Furthermore, the larger the system is, the more obvious benefits can be achieved by the necessary optimality conditions. It can exclude most invalid orders so that the search complexity is significantly reduced. We also note that this necessary optimality conditions can work more efficiently in PDMA than that in SCMA.

\begin{figure}[t]
\vspace{0.3em}
\setlength{\abovecaptionskip}{0.cm}
\setlength{\belowcaptionskip}{-0.cm}
  \centering{\includegraphics[scale=0.85]{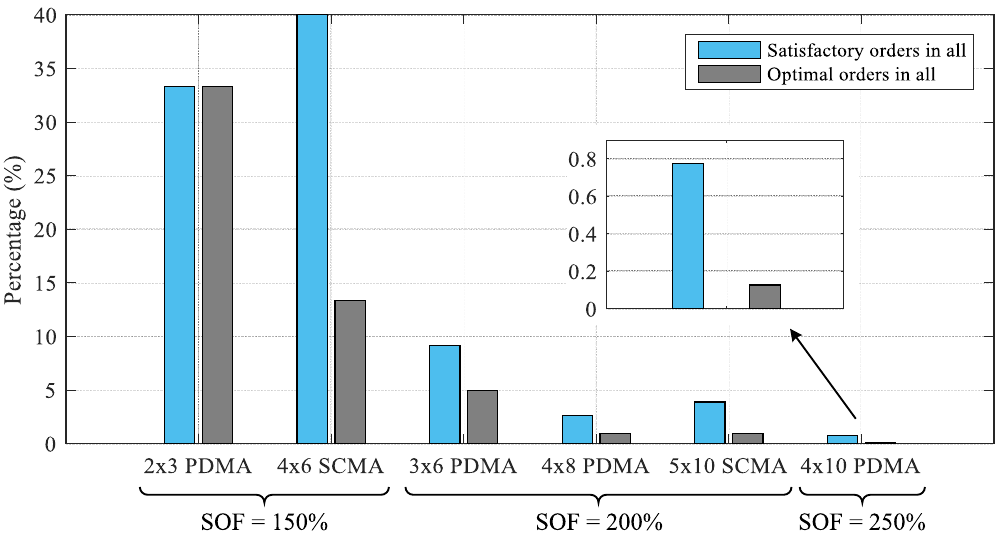}}
  \caption{The percentage of the satisfactory orders, which meet the necessary optimality conditions, in all possible $\left( V! \right)$ combinations, and the optimal orders with maximum variance ${V_\pi }\left( \cal T \right)$ of the detecting SINR sequence in all possible $\left( V! \right)$ combinations.}\label{Orders_Bar}
\vspace{-1em}
\end{figure}

\subsection{Three-Stage Channel Transform under PUP}

In analogy to the approach of SUP, another implementation of the three-stage channel transform is based on the \emph{parallel user partition} (PUP) structure, which corresponds to the first stage in Fig. \ref{SUP_three_stage_channel_transform_figure}. The whole procedure is written as
\begin{equation}\label{PUP_three_stage_channel_transform_procedure}
  W \to \left\{ {{{\overline W}_{{v}}}} \right\} \to \left\{ {{{\overline W}_{{v},j}}} \right\} \to \left\{ {{\overline W}_{{v}}^{\left( i \right)}} \right\}.
\end{equation}

\noindent{\emph{Stage 1: Parallel user partition}}

In the first stage, we parallel split the NOMA channel $W$ into $V$ independent channels ${{\overline W}_{v}} : {{\mathbb{B}}^J} \mapsto {\cal Y}$, whose transition probability functions are written as
\begin{equation}\label{transition_prob_PUP_user_synthesized_channel}
  \begin{aligned}
    & {{\overline W}_v}\left( {{\bf{y}}\left| {{{\bf{b}}_v},{\bf{h}}} \right.} \right) = \\
    & \sum\limits_{\left( {{{\bf{b}}_1}, \cdots ,{{\bf{b}}_{v - 1}},{{\bf{b}}_{v + 1}}, \cdots ,{{\bf{b}}_V}} \right) \in {^{J\left( {V - 1} \right)}}} {\left( {\frac{1}{{{2^{J\left( {V - 1} \right)}}}} \cdot W\left( {{\bf{y}}\left| {{\bf{b}},{\bf{h}}} \right.} \right)} \right)}.
    \end{aligned}
\end{equation}

\begin{definition}
  \emph{An order-$V$ \emph{parallel user partition} ($V$-PUP) of NOMA channel $W$ is defined as
    \begin{equation}\label{PUP_channel_transform}
      W \to \left\{ {{{\overline W}_{{1}}},{{\overline W}_{{2}}}, \cdots ,{{\overline W}_{{V}}}} \right\},
    \end{equation}
  which parallel maps $W$ to a group of $V$ binary vector input channels ${{\overline W}_{{v}}}$, $v=1,2,\cdots,V$, which is referred as the \emph{user synthesized channels}.}
\end{definition}

By assuming that the input binary vectors are equiprobable, the symmetric capacity of ${{\overline W}_{v}}$ is written as
\begin{equation}\label{PUP_synthesized_channel_channel_capacity}
  I\left( {{{\overline W}_v}} \right) = I\left( {{{\sf B}_v};{\sf Y}\left| {\sf H} \right.} \right),
\end{equation}
which indicates these users' data streams are correspondingly detected in a parallel interference cancellation (PIC) manner at the receiver. In addition, suppose the user partition order is configured as $\pi  = \left\{ {1,2, \cdots ,V} \right\}$ in SUP, we obviously have the following inequality
\begin{equation}
  \begin{aligned}
    I\left( {{{\overline W }_v}} \right) & = I\left( {{{\sf B}_v};{\sf Y}\left| {\sf H} \right.} \right)\\
     ~ & \le I\left( {{{\sf B}_v};{\sf Y}\left| {{{\sf B}_1},{{\sf B}_2}, \cdots ,{{\sf B}_{v - 1}},\sf H} \right.} \right) = I\left( {{W_v}} \right).
    \end{aligned}
\end{equation}
Therefore, it can be derived that
\begin{equation}\label{PUP_capacity_small}
  \sum\limits_{v = 1}^V {I\left( {{{\overline W }_v}} \right)}  \le \sum\limits_{v = 1}^V {I\left( {{W_v}} \right)} = I\left( W \right).
\end{equation}
For any other partition order $\pi$, the sum of user synthesized channels' capacities under SUP always equals the original NOMA channel's capacity. Hence, the inequality in Eq. (\ref{PUP_capacity_small}) holds on for any $\pi$.

To describe the distribution of $\{ {{\overline W}_v}\} $, we are also interested in the capacities' mean ${\overline M}\left( W \right)$ and variance ${\overline V}\left( W \right)$, which are defined in analogy to Eq. (\ref{SUP_mean_definition}) and Eq. (\ref{SUP_variance_definition}). However, since all the user synthesized channels are parallel obtained in the PUP approach, the mean and variance of $\{ {{\overline W}_v}\} $ only rely on the original NOMA channel $W$ rather than the partition order in SUP. Furthermore, we note that the mean ${\overline M}\left( W \right)$ is smaller than the corresponding $M_{\pi}\left( W \right)$ in SUP, i.e.,
\begin{equation}\label{PUP_mean_small}
  \overline M \left( W \right) \le {M_\pi }\left( W \right),
\end{equation}
which follows Eq. (\ref{PUP_capacity_small}). Therefore, the PIC multiuser detection in the PUP based PC-NOMA scheme reduces the processing latency while it brings in some capacity loss in the NOMA channel transform and thus introduces performance loss at the same time.

\noindent{\emph{Stage 2: Signal partition}}

After performing the first-stage channel transform which is introduced by NOMA transmission in Eq. (\ref{transition_prob_PUP_user_synthesized_channel}), ${{\overline W}_{v}}$ will be further transformed into a set of independent binary input channels $\{ {{\overline W}_{{v},j}}\} $, which is named as the \emph{signal partition}. In analogy to the three-stage channel transform under SUP, here we also introduce the BICM scheme in signal partition. So the user synthesized channel ${{\overline W}_{v}}$ can be transformed into a group of \emph{bit synthesized channels} $\{ {{\overline W}_{{v},j}}\} $, whose transition probabilities are characterized by
\begin{equation}\label{transition_prob_PUP_bit_synthesized_channel}
  \begin{aligned}
     {{\overline W}_{v,j}}\left( {{\bf{y}} \left| {{b_{v,j}} = b,{\bf{h}}} \right.} \right) = \sum\limits_{\begin{smallmatrix}
     {{\bf{b}}_{v}} \in {{\mathbb B}^J},\\
     {b_{v,j}} = b
     \end{smallmatrix}} {\left( {\frac{1}{{{2^{J - 1}}}} \cdot {{\overline W}_{v}}\left( {{\bf{y}},\left| {{{\bf{b}}_{{v}}},{\bf{h}}} \right.} \right)} \right)},
    \end{aligned}
\end{equation}
where $b_{v,j}$ denotes the $j$-th bit of ${\bf b}_{v}$ with $j = 1,2,\cdots,J$, and $b \in {\mathbb B}$.

\noindent{\emph{Stage 3: Bit partition}}

In the third stage, the channel transform is named as the \emph{bit partition}, and the resulting BMCs $\{ {\overline W}_{v}^{\left( i \right)}\} $ are referred as the \emph{bit polarized channels}. Similar to the bit polarization in the second stage described in \cite{Polar_coded_modulation_seidl}, for each user, by performing the binary-input channel polarization $\bf{G}$ on $N/J$ uses of the resulting $J$ parallel BMCs $\{ {\overline W} _{{v},j}\}$, a total of $NV$ polarized BMCs $\{ {\overline W}_{v}^{\left( i \right)}\} $ will be obtained, where $v = 1,2, \cdots ,V$ and $i = 1,2, \cdots ,N$.

For the three-stage channel transform under PUP, we also expect a lager variance over the user synthesized channel capacities following the polarization principle. Since the PUP scheme is fixed, the only breakthrough is to optimize the NOMA factor graph's structure. But a general comparison on the variance ${\overline V}\left( W \right)$ of different NOMA channels may not be fair because their mean values ${\overline M}\left( W \right)$ are different. Although the variance cannot be viewed as the unique evaluation criterion, it can also reflect the polarization effect among $\{ {{\overline W}_v}\} $ to some extent. The subsequent numerical and simulation results indicate that the irregular connections in PDMA will contribute more to the polarization effect among the user synthesized channels with respect to the regular connections in SCMA.

\section{The Proposed PC-NOMA Schemes}

In this section, two types of PC-NOMA implementations are proposed, which are based on SUP and PUP, respectively. For the SUP based PC-NOMA, we give the detailed construction process and the JSC detecting and decoding scheme, which is based on the ``worst-goes-first'' scheduling algorithm. For the PUP based PC-NOMA scheme, we present the construction process. Then, the PSC detecting and decoding approach is correspondingly developed.

\begin{figure*}[t]
\setlength{\abovecaptionskip}{0.cm}
\setlength{\belowcaptionskip}{-0.cm}
  \centering{\includegraphics[scale=0.58]{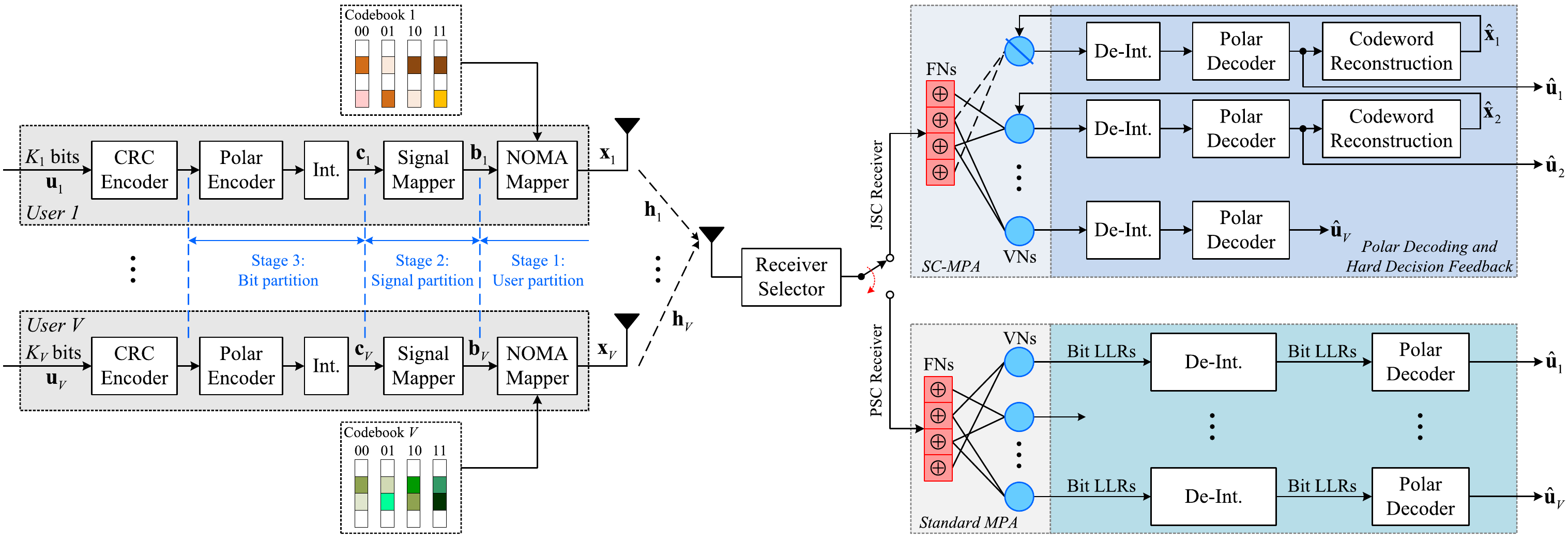}}
  \caption{Illustration of an uplink code-domain multiplexing PC-NOMA system. In the transmitting end, the NOMA mapper, signal mapper and polar encoder correspond to the three stage channel transforms, respectively. There are two types of receivers, which correspond to the SUP and PUP based PC-NOMA schemes, respectively. In this figure, ``Int.'' and ``De-Int.'' denote the abbreviations for interleaver and de-interleaver, respectively.}\label{PCNOMA_system_model}
\end{figure*}

\subsection{The SUP Based PC-NOMA Scheme}

The proposed PC-NOMA scheme is illustrated in Fig. \ref{PCNOMA_system_model}. At the transmitter, the NOMA mapper, the signal mapper and the polar encoder correspond to the three stage channel transforms, respectively. At the receiver, the JSC detecting and decoding scheme is developed, in which the SC-MPA detecting, polar decoding and hard decision feedback are included.

\noindent{\emph{\textbf{1) Transmitter Design: Code Construction}}

Recall that the three-stage channel transform is applied in the SUP based PC-NOMA, hence the construction process can be described by two steps as following, where the Step 1 covers the first and second stage channel transforms.

\noindent{\emph{Step 1) Reliability calculation of bit synthesized channels:}}

This step calculates the reliabilities of the bit synthesized channels which are obtained after the first two-stage channel transforms. Given the received signal $\bf{y}$ and each user's channel gain $\{ {\bf h}_v \}$, the NOMA detector delivers the soft messages of each user's coded bits, i.e., log-likelihood ratios (LLRs). Based on the extrinsic information transfer (EXIT) idea \cite{EXIT}, let $B_{v,j}$ and $A_{v,j}$ denote the random variables which correspond to the $v$-th user's $j$-th coded bit and its \emph{a priori} information. Then for any $v$ and $j$, $A_{v,j}$ is modeled as a conditional Gaussian random variable for $b_{v,j} \in \mathbb{B}$
\begin{equation}
  {A_{v,j}} = {\mu _{v,j}}\left( {1 - 2{b_{v,j}}} \right) + {n_{v,j}},
\end{equation}
where $n_{v,j}$ is a Gaussian random variable with zero mean and variance $\sigma _{v,j}^2 = 2{\mu _{v,j}}$. Using this model, the AMI between ${B_{v,j}}$ and ${A_{v,j}}$ can be written as
\begin{equation}\label{IA}
  \begin{aligned}
{I_A}\left( {v,j} \right) & = I\left( {{B_{v,j}};{A_{v,j}}} \right) = \Omega \left( {{\sigma _{v,j}}} \right)\\
~ & = 1 - \int_{\mathbb{R}} {\frac{1}{{\sqrt {2\pi } {\sigma _{v,j}}}}{{e}^{ - \frac{{{{\left( {l - \sigma _{v,j}^2/2} \right)}^2}}}{{2\sigma _{v,j}^2}}}}{{\log }}\left( {1 - {{e}^{ - l}}} \right){\rm d}l}.
\end{aligned}
\end{equation}
Let $E_{v,j}$ denote the extrinsic information of the $v$-th user's $j$-th coded bit. Once $A_{v,j}$ and received signal $\bf{y}$ have been obtained, the Monte-Carlo simulation is used to estimated the probability distribution function (PDF) of $p\left( {{E_{v,j}}\left| {{b_{v,j}}} \right.} \right)$. So the AMI between ${B_{v,j}}$ and ${E_{v,j}}$ is calculated as
\begin{equation}\label{IE}
  \begin{aligned}
& {I_E}\left( v,j \right) = I\left( {{B_{v,j}};{E_{v,j}}} \right) \\
& = \sum\limits_{{b_{v,j}} \in \mathbb{B}} {\int_{\mathbb{R}} {p\left( {l\left| {{b_{v,j}}} \right.} \right)p\left( {{b_{v,j}}} \right)} {{\log }}\frac{{p\left( {l\left| {{b_{v,j}}} \right.} \right)}}{{\sum\limits_{{{b}_{v,j}'} \in \mathbb{B}} {p\left( {l\left| {{{b}_{v,j}'}} \right.} \right)p\left( {{{b}_{v,j}'}} \right)} }}} {\rm d}l.
\end{aligned}
\end{equation}

For the proposed PC-NOMA scheme under SUP, these coded bits are decided user by user with the SIC manner in the NOMA detector according to Eq. (\ref{transition_prob_SUP_user_synthesized_channel}), hence, the \emph{a priori} information ${I_A}\left( v,j \right)$ can only be $0$ or $1$. If ${I_A}\left( v,j \right) = 0$, it indicates that the $v$-th user has not been decoded. On the contrary, ${I_A}\left( v,j \right) = 1$ stands for that the $v$-th user has already been decoded, where $j = 1,2,\cdots,J$.

Given the partition order $\pi  = \{ {s_1}, \cdots, {s_k}, \cdots ,{s_V}\} $, in order to obtain the $s_k$-th user's reliabilities of bit synthesized channels $\{W_{s_{k},j}\}$, we estimate ${I_E}\left( s_{k},j \right) $ on condition that ${I_A}\left( s_{k'},j \right) = 1$ with $k' = 1, \cdots ,k - 1$, and ${I_A}\left( s_{k'},j \right) = 0$ with $k' = k, \cdots ,V$, where $j = 1,2,\cdots,J$. After that, $W_{{s_k},j}$ will be approximated by the binary input AWGN (BI-AWGN) channel $\widetilde W_{{s_k},j}$ with an equivalent capacity
\begin{equation}\label{equip_cap}
  I( {{W_{{s_k},j}}} ) = I( {{{\widetilde W}_{s_k,j}}} ).
\end{equation}
Therefore, the LLR mean of $\widetilde W_{s_k,j}$ can be calculated as
\begin{equation}\label{LLR_mean}
  {{\tilde \mu }_{s_k,j}} = \frac{{\tilde \sigma}_{s_k,j}^2}{2} = \frac{\left( {{\Omega ^{ - 1}}\left( {{I_E}\left( s_k,j \right)} \right)} \right)^2}{2},
\end{equation}
where ${\Omega ^{ - 1}}\left(  \cdot  \right)$ denotes the inverse function of ${\Omega}\left(  \cdot  \right)$.

\noindent{\emph{Step 2) Reliability calculation of bit polarized channels:}}

Since $W_{s_k,j}$ is approximated by a BI-AWGN channel $\widetilde W_{s_k,j}$ with equivalent capacity, the reliabilities of the corresponding bit polarized channels $\{ W_{s_k}^{\left( i \right)}\} $ can be evaluated by Gaussian approximation (GA) with the LLR mean ${{\tilde \mu }_{s_k,j}}$ in the same way as that in Ar{\i}kan's conventional binary-input channel polarization scheme \cite{Trifonov_construct,Chen_Kai_parallel_construction}. Finally, the $K$ most reliable channels among $\{ W_{s_k}^{\left( i \right)}\} $ are selected to carry the information bits, and the others are fixed to the frozen bits. In the SUP based PC-NOMA scheme, each user will be allocated different number of information bits. For the $s_k$-th user, its information set is denoted by ${\cal A}_{s_k}$. The $s_k$-th user's code rate is ${R_{s_k}} = \left| {{\cal A}_{s_k}} \right|/N$, and the total transmission bits are written as $K = \sum\nolimits_{k = 1}^{V} {\left| {{{\cal A}_{{s_k}}}} \right|}  = \sum\nolimits_{k = 1}^{V} {{K_{{s_k}}}}$.

After above two steps of PC-NOMA construction, the allocation of code rate for the $s_k$-th user matches its user synthesized channel capacity ${I( {{W_{{s_k}}}})}$. For different NOMA channel $W$ and partition order $\pi$, these capacities may vary significantly. The SUP based PC-NOMA scheme preferably matches this scenario by a pretty flexible choice of each user data stream's code rate.

\noindent{\emph{\textbf{2) Partition Order Scheduling}}}

Another problem of the transceiver design is scheduling each user's partition order so as to determine their reliabilities and guide the multiuser detection in the receiver. Since the optimal scheduling strategy will be computationally intensive for $V > 2$, we therefore introduce a search algorithm as the near-optimal solution for this problem, which follows the necessary optimality conditions proved in section III.B and is referred as the polar scheduling (PS) strategy. During the PS process, each bit synthesized channel's LLR mean $\{ {{\tilde \mu }_{{s_k},j}}\}$ can be also determined, which will further be used to estimate the bit polarized channels' reliabilities under the GA method and accomplish the code construction. The PS procedure and LLR means $\{ {{\tilde \mu }_{{s_k},j}}\}$ calculation of the SUP based PC-NOMA scheme are described together in Algorithm \ref{Polar_scheduling_algorithm}.

The search in Algorithm \ref{Polar_scheduling_algorithm} essentially follows the ``worst-goes-first'' principle, which is in contrast with the ``best-goes-first'' based average scheduling (AS) in traditional multiuser detections \cite{D_Tse_wireless_communications}. AS strategy makes the best user be detected first, and the worst user is detected lastly. By using Algorithm \ref{Polar_scheduling_algorithm}, the computational complexity has been significantly reduced compared to the optimal brute-fore search for all possible combinations with $\left( V! \right)$ enumeration complexity, while the PS algorithm needs only $\frac{{\left( {1 + V} \right)V}}{2} - 1$ enumerations. Although the search algorithm in PS is just near-optimal, the subsequent numerical results indicates that it is almost optimal.

\begin{algorithm}[h]
\setlength{\abovecaptionskip}{0.cm}
\setlength{\belowcaptionskip}{-0.cm}
\caption{Polar scheduling (PS) algorithm}\label{Polar_scheduling_algorithm}
\KwIn {NOMA channel $W$ and modulation order $J$;}
\KwOut {NOMA detecting order $\pi$ and the LLR means $\{ {{\tilde \mu }_{{s_k},j}}\} $ of the bit synthesized channels;}
\For{$v = 1 \to V$}
{
    \For{$j = 1 \to J$}
    {
        Initialize ${I_A}\left( {v,j} \right) = 0$\;
    }
}
Initialize $\pi = \varnothing$\;
\For{$k = 1 \to V$}
{
    Calculate each user's ${I_E}\left( v,j \right)$ via the Monte-Carlo simulation, and record ${\tilde I}({W}_v)$ as
    \begin{equation}
      \tilde I\left( {{W_v}} \right) = \sum\nolimits_{j = 1}^J {{I_E}\left( {v,j} \right)},
    \end{equation}
    where $v = 1,2,\cdots,V$\;
    Record the $k$-th detected user index
    \begin{equation}\label{polar_scheduling_key_formula}
      s_k = \mathop {\arg \min }\limits_{v,v \notin \pi } {\tilde I}({W}_v),
    \end{equation}
    and add $s_k$ into $\pi$ such that $\pi  = \{ {s_1}, \cdots ,{s_{k - 1}},{s_k}\} $\;
    \For{$j = 1 \to J$}
    {
        Update ${I_A}\left( {s_k,j} \right) = 1$\;
        Record $I({W_{{s_k},j}}) = {I_E}\left( {{s_k},j} \right)$\;
        Calculate the LLR mean of ${W_{{s_k},j}}$ as
        \begin{equation}
          {{\tilde \mu }_{{s_k},j}} = \frac{{{{\left( {{\Omega ^{ - 1}}\left( {I\left( {{W_{{s_k},j}}} \right)} \right)} \right)}^2}}}{2},
        \end{equation}
        where ${\Omega ^{ - 1}}\left(  \cdot  \right)$ is the inverse function of ${\Omega}\left(  \cdot  \right)$ defined in Eq. (\ref{IA})\;
    }
}
\end{algorithm}

\noindent{\emph{\textbf{3) Receiver Design: JSC Detecting and Decoding}}}

In the receiver, we use a joint successive cancellation (JSC) detecting and decoding structure to construct the multiuser receiver of PC-NOMA. Given the received signal and NOMA detecting order, after the NOMA decoder gives the bit LLRs of $s_k$-th user, they will be sent into the binary polar decoder to retrieve the information bits ${{\bf{\hat u}}_{s_k}}$. Then ${{\bf{\hat u}}_{s_k}}$ is employed to reconstruct codeword ${{\bf{\hat x}}_{s_k}}$. For obtaining the bit LLRs of next $s_{k+1}$-th user, one needs to apply the interference cancellation in the factor graph with hard decision ${{\bf{\hat x}}_{s_k}}$. Given the NOMA detecting order $\pi$, the whole procedure of JSC detecting and decoding scheme for the SUP based PC-NOMA system is described in Algorithm \ref{JSC_detecting_and_decoding_algorithm}.

As a part of the JSC detecting and decoding scheme, we give detailed description about the successive cancellation aided MPA (SC-MPA), which acts as the NOMA detecting algorithm for PC-NOMA. When ${{\cal D}_k}$ has been determined, the set of VNs connected to FN $f$ is written as ${\cal V}_f^{\left( {{{\cal D}_k}} \right)}$, where ${\cal V}_f^{\left( {{{\cal D}_0}} \right)} = {\cal V}_f$, in the pruned factor graph. Clearly, for any $v \in {{\cal D}_k^c}$, the set of FNs connected to it will still be ${\cal F}_v$. By employing the MPA in the pruned factor graph, we can get the $s_{k+1}$-th user's bit LLRs which is denoted by $\Lambda ({b_{{s_{k+1}},j}})$. This algorithm is referred as SC-MPA, which is described as follows:

Let ${\cal V}_f^{\left( {{{\cal D}_k}} \right)}\backslash v$ denote the set ${\cal V}_f^{\left( {{{\cal D}_k}} \right)}$ with VN $v$ excluded, and ${{\cal{F}}_v}\backslash f$ denote the set ${{\cal{F}}_v}$ with FN $f$ excluded. In addition, we define and calculate the following notations corresponding to the $l$-th SC-MPA inner-loop iteration:
\begin{itemize}
  \item ${\Gamma}_{v \to f}^{\left( l \right)}\left( {{{\bf{x}}_v}} \right)$: The log-likelihood function of the $v$-th user's codeword ${{{\bf{x}}_v}}$, which is sent from VN $v$ to FN $f$.
  \item ${\Gamma}_{f \to v}^{\left( l \right)}\left( {{{\bf{x}}_v}} \right)$: The log-likelihood function of the $v$-th user's codeword ${{{\bf{x}}_v}}$, which is sent from FN $f$ to VN $v$.
\end{itemize}

Given ${\cal D}_{k-1}$, for obtaining the $s_k$-th user's bit LLRs, the procedure of SC-MPA is describe as follows:

\noindent{\emph{Step 1: Initialization}}
\begin{enumerate}
  \item Set $l = 1$ and the maximum number of iterations ${\omega}_{\max}$.
  \item Update the received signal as
        \begin{equation}\label{MPA_received_signal_update}
          {\bf{\hat y}} = {\bf{y}} - \sum\limits_{v \in {{\cal D}_{k - 1}}} {diag\left( {{{\bf{h}}_v}} \right){{{\bf{\hat x}}}_v}}.
        \end{equation}
  \item Initialize the log-domain conditional probability
        \begin{equation}\label{MPA_conditional probability}
        {\Xi _f}\left( {\bf{x}} \right) =  - \frac{1}{{{N_0}}}{\left\| {{{\hat y}_f} - \sum\nolimits_{v \in {\cal V}_f^{\left( {{{\cal D}_{k-1}}} \right)}} {{h_{v,f}}{x_{v,f}}} } \right\|^2},
        \end{equation}
        where ${\bf x} = \{ {{\bf x}_v}\} $ with $v \in {\cal V}_f^{\left( {{{\cal D}_{k-1}}} \right)}$, and ${{\hat y}_f}$ is the $f$-th element in ${\bf{\hat y}}$.
  \item For $v \in {{\cal D}_{k-1}^c}$ and $f \in {\cal F}_v^{\left( {{{\cal D}_{k-1}}} \right)}$, initialize ${\Gamma}_{v \to f}^{\left( 0 \right)}\left( {{{\bf{x}}_v}} \right) =  - \ln M =  - \ln {2^J}$.
\end{enumerate}

\begin{algorithm}[t]
\setlength{\abovecaptionskip}{0cm}
\setlength{\belowcaptionskip}{-1cm}
\caption{JSC detecting and decoding for PC-NOMA}\label{JSC_detecting_and_decoding_algorithm}
\KwIn {Total $N/J$ time slots received signal $\bf y$, each user's channel gain $\{ {\bf h}_v \}$ and the predefined NOMA detecting order $\pi = \{ {s_1},{s_2}, \cdots ,{s_V}\} $;}
\KwOut {Each user's estimated information bits ${{\bf{\hat u}}_{v}}$;}
\For{$k = 1 \to V$}
{
    Set ${{\cal D}_{k-1}} = \left\{ {{s_1},{s_2}, \cdots ,{s_{k-1}}} \right\}$, where ${{\cal D}_0} = \varnothing $\;
    Run SC-MPA in the pruned factor graph and output the $s_k$-th user's bit LLRs\;
    Send the bit LLRs into the binary polar decoder to retrieve ${{\bf{\hat u}}_{s_k}}$ based on SC, SCL or CA-SCL decoding algorithms\;
    Employ the $s_k$-th user's decoded result ${{\bf{\hat u}}_{s_k}}$ to reconstruct its corresponding NOMA codewords ${{{\bf{\hat x}}}_{{s_k}}}$ and feed it back to the NOMA detector\;
}
\end{algorithm}

\noindent{\emph{Step 2: Iterative message passing}}
\begin{enumerate}
  \item
  FN updating: for $1 \le f \le F$ and each VN $v \in {\cal V}_f^{\left( {{{\cal D}_{k-1}}} \right)}$, calculate
  \begin{equation}\label{standard_MPA_FN_update}
  \begin{aligned}
      {\Gamma}_{f \to v}^{\left( l \right)}\left( {{{\bf{x}}_v}} \right) & = {\mathop {{\max }^*}\limits_{{{\bf{x}}_u}:u \in {{\cal V}_f^{\left( {{{\cal D}_{k-1}}} \right)}}\backslash v}}\\
      & \left\{ {{\Xi} _f}\left( {\bf{x}} \right) + \sum\limits_{u \in {{\cal V}_f^{\left( {{{\cal D}_{k-1}}} \right)}}\backslash v} {{\Gamma}_{u \to f}^{\left( {l - 1} \right)}\left( {{{\bf{x}}_u}} \right)}  \right\},
  \end{aligned}
  \end{equation}
  where the ${\max ^*}\left\{ {{a_1}, \cdots ,{a_n}} \right\} = \ln \left( {{e^{{a_1}}} +  \cdots  + {e^{{a_n}}}} \right) \approx \max \left\{ {{a_1}, \cdots ,{a_n}} \right\}$. Then for each ${\Gamma}_{f \to v}^{\left( l \right)}\left( {{{\bf{x}}_v}} \right)$, the normalization will be performed as
  \begin{equation}
    \Gamma _{f \to v}^{\left( l \right)}\left( {{{\bf{x}}_v}} \right) = \Gamma _{f \to v}^{\left( l \right)}\left( {{{\bf{x}}_v}} \right) - {\mathop {{\max }^*}\limits_{{{\bf{x}}_v} \in {{\cal X}_v}}}\left\{ {\Gamma _{f \to v}^{\left( l \right)}\left( {{{\bf{x}}_v}} \right)} \right\}.
  \end{equation}
  \item VN updating: for $k \le k' \le V$, let $v = s_{k'}$, then for each FN $f \in {\cal F}_v$, calculate
  \begin{equation}\label{standard_MPA_VN_update}
    {\Gamma}_{v \to f}^{\left( l \right)}\left( {{{\bf{x}}_v}} \right) = \sum\limits_{h \in {{\cal F}_v}\backslash f} {{\Gamma}_{h \to v}^{\left( l \right)}\left( {{{\bf{x}}_v}} \right)}.
  \end{equation}
  Meanwhile the normalization will also be performed on each ${\Gamma}_{v \to f}^{\left( l \right)}\left( {{{\bf{x}}_v}} \right)$ as
  \begin{equation}
    \Gamma _{v \to f}^{\left( l \right)}\left( {{{\bf{x}}_v}} \right) = \Gamma _{v \to f}^{\left( l \right)}\left( {{{\bf{x}}_v}} \right) - {\mathop {{\max }^*}\limits_{{{\bf{x}}_v} \in {{\cal X}_v}}}\left\{ {\Gamma _{v \to f}^{\left( l \right)}\left( {{{\bf{x}}_v}} \right)} \right\}.
  \end{equation}
  \item Update $l = l + 1$, if $l > {{\omega}_{\max }}$, go to Step 3. Otherwise, go to the beginning of Step 2.
\end{enumerate}

\noindent{\emph{Step 3: The $s_k$-th user's bit LLR calculation}}

After the iterative message passing in Step 2, the $s_k$-th user's bit LLRs are given by
\begin{equation}\label{standard_MPA_LLR_output}
    {\Lambda}\left( {{b_{{s_k},j}}} \right) = {\mathop {{\max}^*}\limits_{{{\bf{x}}_{s_k}}:{{{\bf{x}}_{s_k}^{\left( j \right)}}} = 0}}\left\{ {{\Gamma}_{s_k}}\left( {{{\bf{x}}_{s_k}}} \right)\right\}  - {\mathop {{\max}^*}\limits_{{{\bf{x}}_{s_k}}:{{{\bf{x}}_{s_k}^{\left( j \right)}}} = 1}}\left\{ {{\Gamma}_{s_k}}\left( {{{\bf{x}}_{s_k}}} \right) \right\},
\end{equation}
where ${{\Gamma}_{s_k}}\left( {{{\bf{x}}_{s_k}}} \right)$ is written as
\begin{equation}\label{standard_MPA_LLR_output_Dv}
  {{\Gamma}_{s_k}}\left( {{{\bf{x}}_v}} \right) = \sum\limits_{h \in {{\cal F}_{s_k}}} {{\Gamma}_{h \to {s_k}}^{\left( {{{\omega}_{\max }}} \right)}\left( {{{\bf{x}}_{s_k}}} \right)}.
\end{equation}

Several decoding algorithms can be used in each user's polar decoder. Note that the performance of successive cancellation (SC) decoding \cite{Arikan} is unsatisfying in the practical cases with finite-length blocks, we employ the cyclic redundancy check (CRC) aided successive cancellation list (CA-SCL) decoding \cite{Tal_Vardy_SCL,niukai_com_magazine}. In the JSC detecting and decoding algorithm, since the CA-SCL decoding algorithm is applied, with the help of the CRC sequence, it can be known whether a user has been correctly decoded. If a user has decoding paths passed the CRC check in the list, the CRC-passing path with the largest likelihood will be taken as the decoding result ${{\bf{\hat u}}_{s_k}}$. On the contrary, if the user has no decoding path passed CRC check, the survival path with the largest likelihood will be taken as the decoding result ${{\bf{\hat u}}_{s_k}}$.

In Fig. \ref{PCNOMA_system_model}, we show a toy example for the JSC receiver, where the first user has already been decoded. Then its decoded information bits ${{\bf{\hat u}}_{1}}$ are employed to reconstruct the codeword ${{\bf{\hat x}}_{1}}$. As the hard information, ${{\bf{\hat x}}_{1}}$ will feedback into the NOMA detector. Correspondingly, its associated VN and edges are removed from the factor graph.

\subsection{The PUP Based PC-NOMA Scheme}

In contrast to the SUP approach for PC-NOMA, all the users are treated equally at the transceiver in the framework of PUP based PC-NOMA. As demonstrated in Fig. \ref{PCNOMA_system_model}, the transmitter is still divided into three stages. The PSC detecting and decoding scheme is used in the receiver, where all the users' data streams are parallel detected and decoded.

\noindent{\emph{\textbf{1) Transmitter Design: Code Construction}}}

The construction process can be described by two steps as follows:

\noindent{\emph{Step 1) Reliability calculation of bit synthesized channels:}}

Since the receiver of the PUP based PC-NOMA scheme performs parallel detecting and decoding, i.e., it neglects the feedback of decoded users. Correspondingly, the reliability calculation of the bit synthesized channels are simpler than that in the SUP scenario.

To obtain the reliabilities of the bit synthesized channels $\{{\overline W}_{v,j}\}$, we estimate ${I_E}\left( v,j \right) $ on condition that all the \emph{a priori} information is set to zero, i.e., ${I_A}\left( v,j \right) = 0$ $\forall v,j$.  After that, ${\overline W}_{v,j}$ is approximated by a BI-AWGN channel $\widetilde W_{v,j}$ with equivalent capacity
\begin{equation}\label{equip_cap}
  I( {{{\overline W}_{v,j}}} ) = I( {{{\widetilde W}_{v,j}}} ).
\end{equation}
Then, the LLR mean of $\widetilde W_{v,j}$ can be characterized as
\begin{equation}\label{LLR_mean}
  {{\tilde \mu }_{v,j}} = \frac{{\tilde \sigma}_{v,j}^2}{2} = \frac{\left( {{\Omega ^{ - 1}}\left( {{I_E}\left( v,j \right)} \right)} \right)^2}{2}.
\end{equation}

\noindent{\emph{Step 2) Reliability calculation of bit polarized channels:}}

The reliabilities of the corresponding bit polarized channels $\{ W_{v}^{\left( i \right)}\} $ can be evaluated by the GA approach with the LLR mean ${{\tilde \mu }_{v,j}}$ in the same way as that in \cite{Trifonov_construct,Chen_Kai_parallel_construction}. Finally, the $K$ most reliable channels among $\{ W_{v}^{\left( i \right)}\} $ are selected to carry the information bits, and the others are fixed to the frozen bits. In the PUP based PC-NOMA scheme, each user may also be allocated different number of information bits, which is determined by the NOMA channel $W$. For the $v$-th user, its information set is denoted by ${\cal A}_{v}$. The $v$-th user's code rate is ${R_{v}} = \left| {{\cal A}_{v}} \right|/N$, and we have $K = \sum\nolimits_{v = 1}^{V} {\left| {{{\cal A}_{{v}}}} \right|}  = \sum\nolimits_{v = 1}^{V} {{K_{{v}}}}$.

\noindent{\emph{\textbf{2) Receiver Design: PSC Detecting and Decoding}}}

In the receiver, the standard MPA \cite{SCMA_original_PIMRC} acts as the multiuser detection to parallel eliminate the multiple access interference (MAI). Then the remaining MAI will be further cancellated by each user's polar decoder with the successive cancellation decoding schemes in the second step. On the whole, the two steps of parallel and successive cancellation (PSC) of MAI compose the receiver structure of the PUP based PC-NOMA scheme. The multiuser detector neglects the relations between the users and computes their bit LLRs independently without feedback of other user's information, which aims for reducing the latency. These bit LLRs are then fed into each user's polar decoder, respectively. Compared to the JSC receiver, the PSC can obviously reduce the processing latency with some cost of performance loss.

\section{Performance Evaluation}

In this section, the numerical and simulation results about the PC-NOMA are provided. Three NOMA configurations are adopted, where the factor graph matrices of the $4 \times 6$ SCMA and the $2 \times 3$ PDMA have already been given in Eq. (\ref{6x4_factor_graph_matrix}) and Eq. (\ref{3x2_factor_graph_matrix}), respectively. Another factor graph matrix of the $3 \times 6$ PDMA scheme \cite{PDMA_original_TVT} is defined as
\begin{equation}\label{6x3_factor_graph_matrix}
{\bf{F}} = \left[ {\begin{array}{*{20}{c}}
1&1&0&1&0&0\\
1&0&1&0&1&0\\
0&1&1&0&0&1
\end{array}} \right].
\end{equation}
In addition, for the $4 \times 6$ SCMA scheme, the codebooks in \cite{huawei_codebook} are used, whose design details are given in \cite{huawei_codebook_details}. Note that this SCMA codebook design aims for maximizing the sum rate based on the rotation of pulse-amplitude modulation (PAM) constellations. For the other two PDMA schemes, each user's codebook is the standard quadrature phase shift keying (QPSK) modulation without complex codebook design \cite{PDMA_original_TVT}. So in these three NOMA configurations, each user's modulation order is $J = 2$.

For PC-NOMA, each user's code length is $N = 1024$, and the overall code rate is configured as $R = 0.5$. Additionally, the CRC-16 in \cite{LTE} is used. When the SUP based PC-NOMA scheme is adopted, the JSC detecting and decoding algorithm will be exploited in the receiver, where the SC-MPA detecting and the CA-SCL decoding is employed, and the list size is set to $32$. The PS strategy is used to determine the partition and detecting order, and the AS approach is also adopted as a comparison. When we use the PUP based PC-NOMA scheme, the standard MPA detecting and the CA-SCL decoding will be correspondingly used. The number of inner-loop iterations in the SC-MPA or standard MPA is set to $3$.

As the comparative scheme, the TC-NOMA approach is also investigated. The turbo encoder and rate-matching algorithm used in 3GPP LTE standard \cite{LTE} are employed. Each user's code length and the overall code rate are the same as that in PC-NOMA. In addition, the Log-MAP algorithm \cite{Lin_shu_channel_coding} is used for turbo decoding, where the number of iterations between the two component codes is set to $8$. At this time, the complexity of turbo decoding is higher than polar decoding \cite{niukai_com_magazine}. For fair comparison, the JSC multiuser receiver in PC-NOMA corresponds to the standard MPA receiver of TC-NOMA with outer-loop iterations (w/-oi) between the detector and the decoder \cite{SCMA_outloop_iterations}. Since there are $V$ times hard information feedback in the JSC receiver of PC-NOMA, the number of outer-loop iterations in TC-NOMA is thus configured as $V$. The low latency PSC scheme corresponds to the standard MPA receiver of TC-NOMA without outer-loop iterations (w/o-oi) \cite{SCMA_original_PIMRC}.

\begin{figure}[t]
\vspace{0.3em}
\setlength{\abovecaptionskip}{0.cm}
\setlength{\belowcaptionskip}{-0.cm}
  \centering{\includegraphics[scale=0.68]{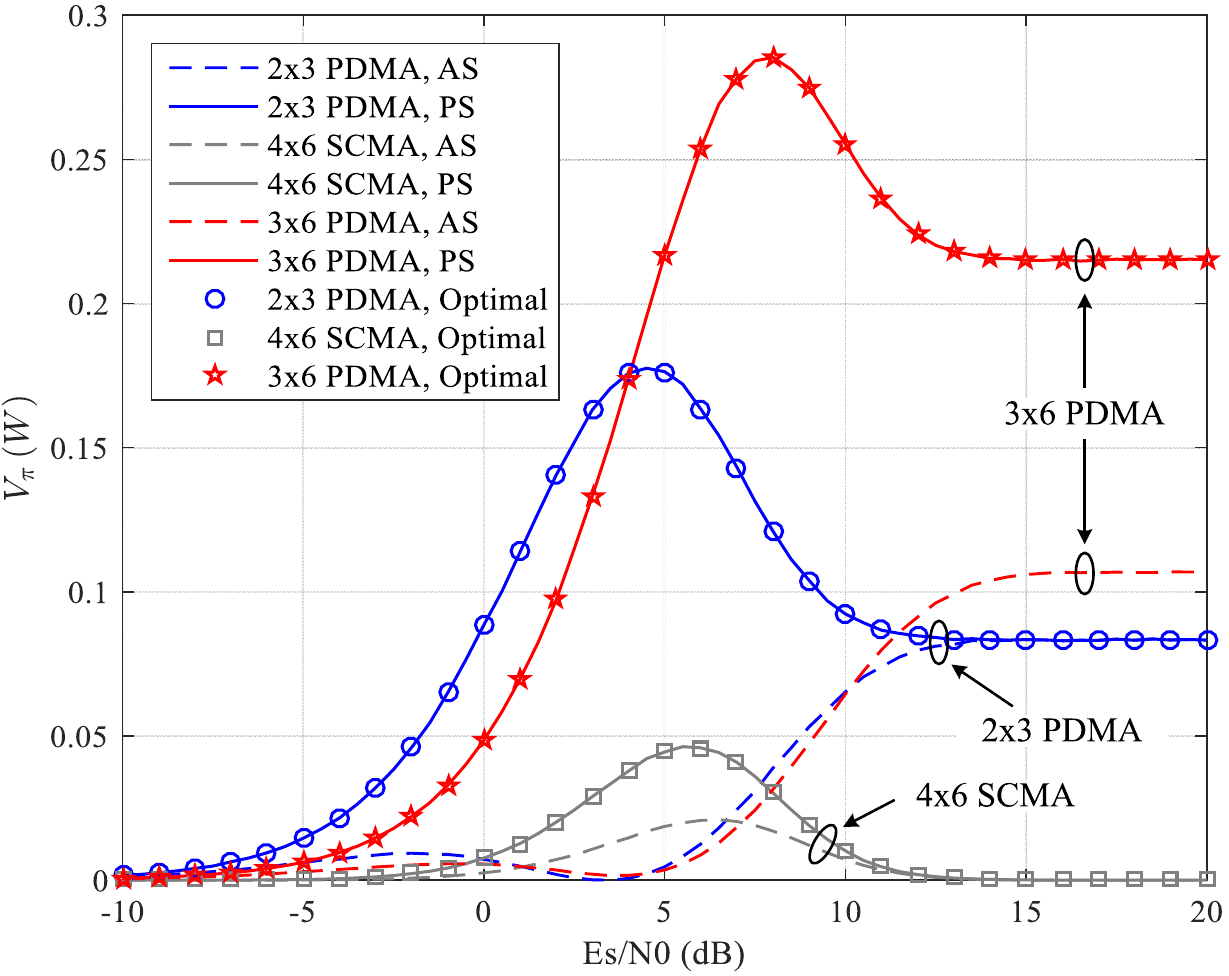}}
  \caption{The variance of user synthesized channel capacities with the SUP scheme. The channel is configured as the AWGN channel.}\label{SIC_AMI_Var_fig}
\end{figure}

\subsection{Numerical Result Analysis}

For the SUP scheme, Fig. \ref{SIC_AMI_Var_fig} depicts the variance of the user synthesized channel capacities ${V_\pi }\left( W \right)$ defined in Eq. (\ref{SUP_variance_definition}) over the AWGN channel. It can be observed that the proposed PS strategy can significantly lead to a larger variance compared to the AS strategy, as expected before. The optimal results are obtained by the brute-force search, which aims for maximizing the variance. It can be seen that the proposed search method in PS is almost the optimum for any configurations of SNRs and NOMA systems. Furthermore, from these results, we can conclude that the irregular superposition structure of PDMA will be more favorable for a larger variance. It makes the PC-PDMA system can further take the advantage of joint design of transmitter and receiver. By contrast, the regular superposition structure in SCMA leads to relatively smaller variance, which indicates the differences among the users are smaller.

\subsection{Simulation Result Analysis}

The average block error ratio (BLER) performance of the SUP based PC-NOMA schemes are shown in Fig. \ref{BLER1_fig}, where the channel is configured as the AWGN and the corresponding TC-NOMA schemes are also evaluated. Clearly, when the SOF $\eta = 150\%$, the proposed PC-SCMA scheme outperforms the TC-SCMA scheme by at least $0.5$dB. Furthermore, when the SOF $\eta = 200\%$, the gap between PC-PDMA and TC-PDMA can even reach about $1.5$dB, which verifies the conclusion in Fig. \ref{SIC_AMI_Var_fig}. In addition, we note that the TC-NOMA schemes demonstrate obvious error floor in the high SNR region, while all of the PC-NOMA schemes can strictly ensure no error floor which follows the principle of polar codes. Besides, compared to the AS manner, the proposed PS strategy can bring additional $0.25$dB gain, which verifies the efficiency of the proposed scheduling algorithm. Additionally, we observe that when the SOF $\eta = 150\%$, the PC-PDMA scheme with standard QPSK modulation has just about $0.1$dB performance loss with respect to the PC-SCMA. But the SCMA codebooks need complex design to achieve this performance. That means PDMA can bring additional polarization effect via the irregular degree distribution and improves its performance. The above analysis indicates that for PC-NOMA, the codebooks and the NOMA superposition structure need to be jointly designed so as to achieve better performance.

For the Rayleigh fading channel, the average BLER results of the SUP based PC-NOMA schemes are demonstrated in Fig. \ref{BLER2_fig}, where the corresponding TC-NOMA schemes are also evaluated. In analogy to Fig. \ref{BLER1_fig}, it can also be observed that the PC-NOMA schemes outperform the TC-NOMA schemes by $0.5 \sim 0.9$dB at various configurations. The TC-NOMA schemes will also show significant error floors in the high SNR regions, while the proposed PC-NOMA schemes can effectively avoid this weakness. In addition, the proposed PS strategy is also efficient in fading channel. Furthermore, it is worth noticing that in fading channel, the PC-PDMA scheme can outperform the PC-SCMA scheme.

Fig. \ref{BLER3_fig} shows the average BLER performance of the PUP based PC-NOMA schemes over the AWGN and the Rayleigh fading channels, where the performances of the corresponding TC-NOMA w/o-oi schemes are also provided. Clearly, under the AWGN channel, given the SOF $\eta = 150\%$, the proposed PC-SCMA scheme outperforms the TC-SCMA about $0.4$dB, where the $4 \times 6$ SCMA is adopted. With the same SOF, when the $2 \times 3$ PDMA is used, this performance gain will expand to about $0.7$dB. This observation further indicates that the proposed PC-PDMA schemes can better fit and exploit the irregular structure of PDMA. It will conversely guide the joint design of NOMA codebook and the channel coding under the framework of PC-NOMA. Under the Rayleigh fading channel, the SOF is configured as $\eta = 150\%$ and $\eta = 200\%$, the performance gain of PC-PDMA achieves about $0.7 \sim 0.8$dB compared to TC-PDMA.

\begin{figure}[htbp]
\vspace{0.3em}
\setlength{\abovecaptionskip}{0.cm}
\setlength{\belowcaptionskip}{-0.cm}
  \centering{\includegraphics[scale=0.58]{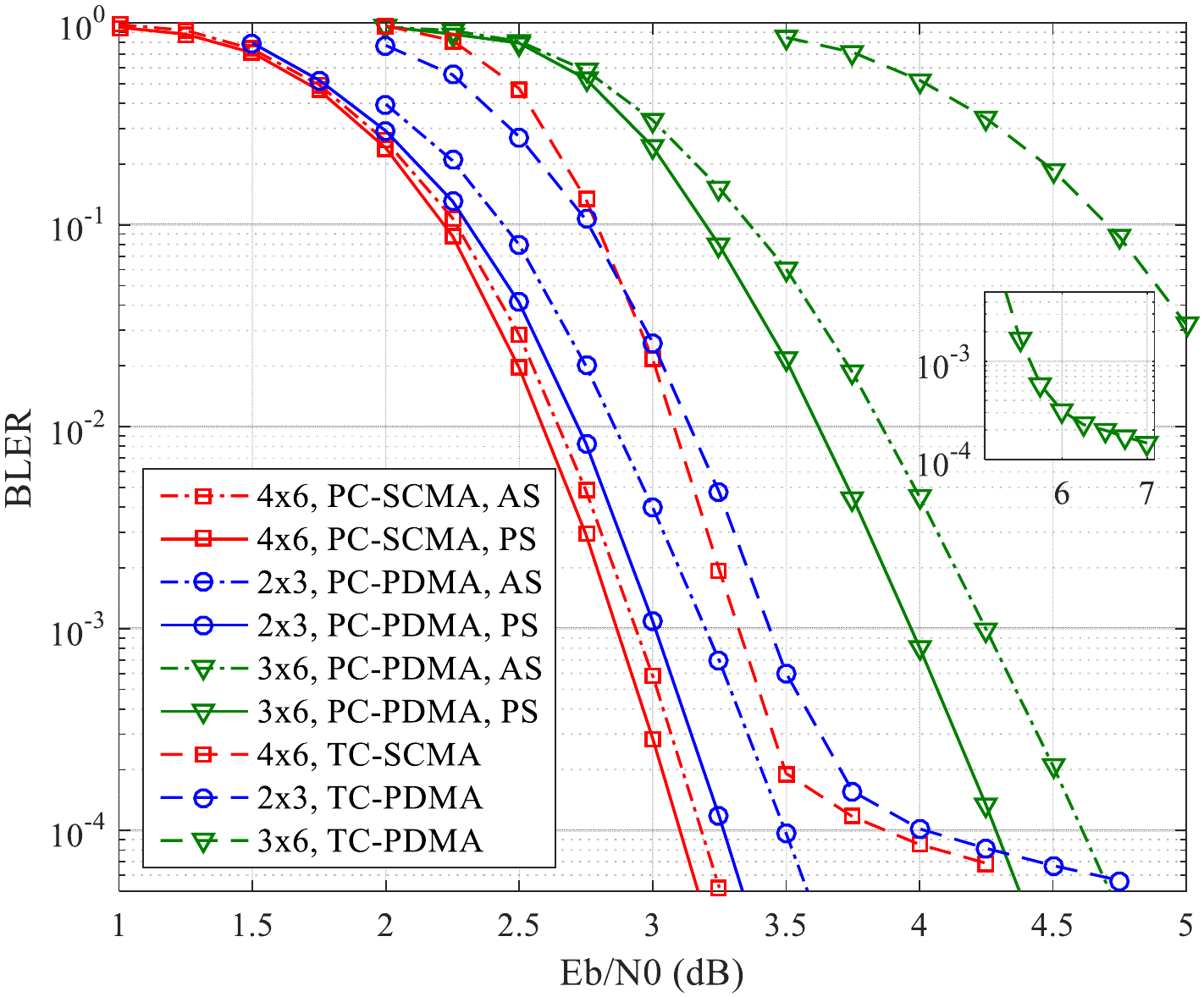}}
  \caption{The average BLER performance under the AWGN channel, where the SUP based PC-NOMA and the TC-NOMA w/-oi are adopted.}\label{BLER1_fig}
\end{figure}

\begin{figure}[htbp]
\setlength{\abovecaptionskip}{0.cm}
\setlength{\belowcaptionskip}{-0.cm}
  \centering{\includegraphics[scale=0.58]{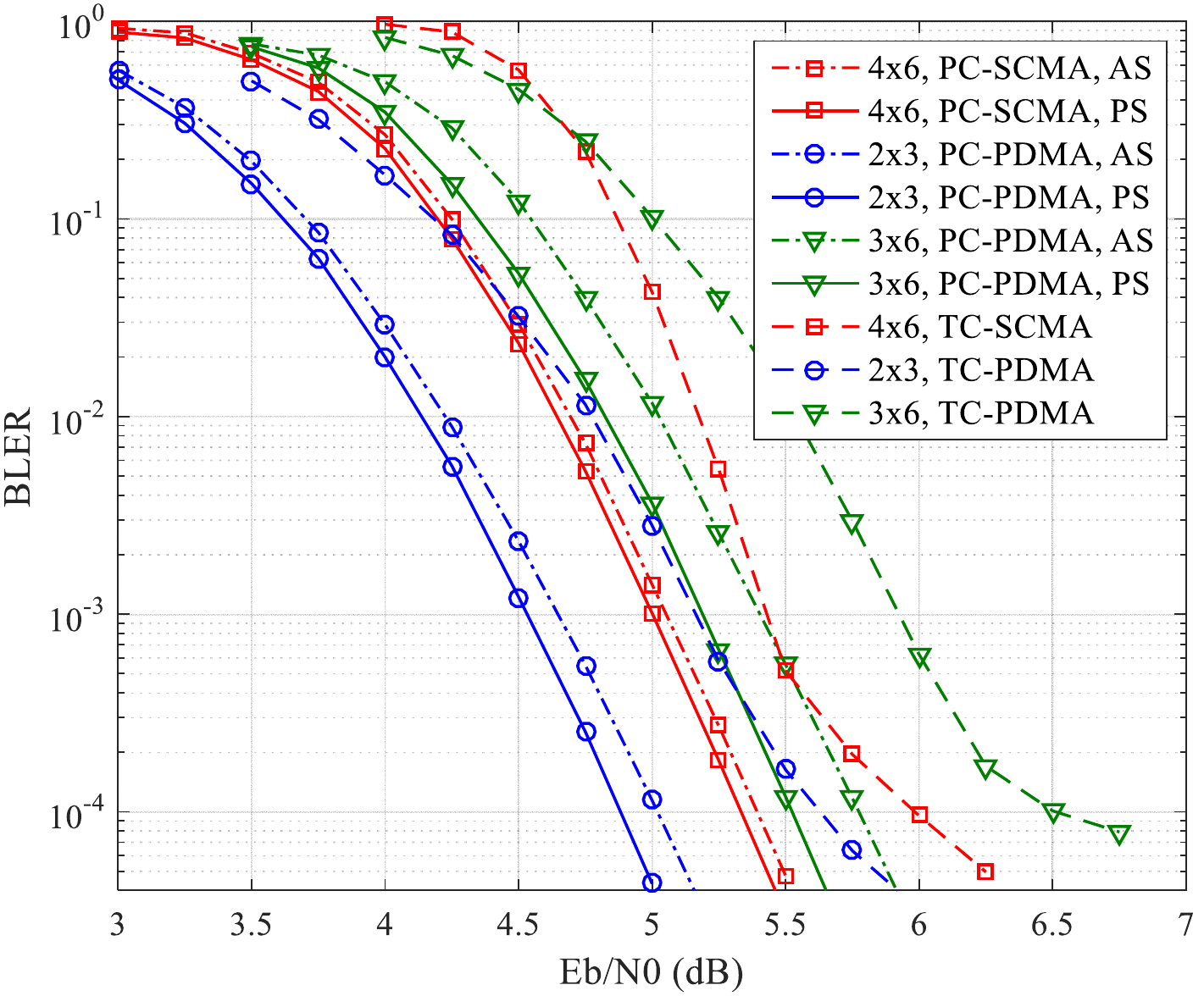}}
  \caption{The average BLER performance under the Rayleigh fading channel, where the SUP based PC-NOMA and the TC-NOMA w/-oi are adopted.}\label{BLER2_fig}
\end{figure}

\begin{figure}[htbp]
\setlength{\abovecaptionskip}{0.cm}
\setlength{\belowcaptionskip}{-0.cm}
  \centering{\includegraphics[scale=0.58]{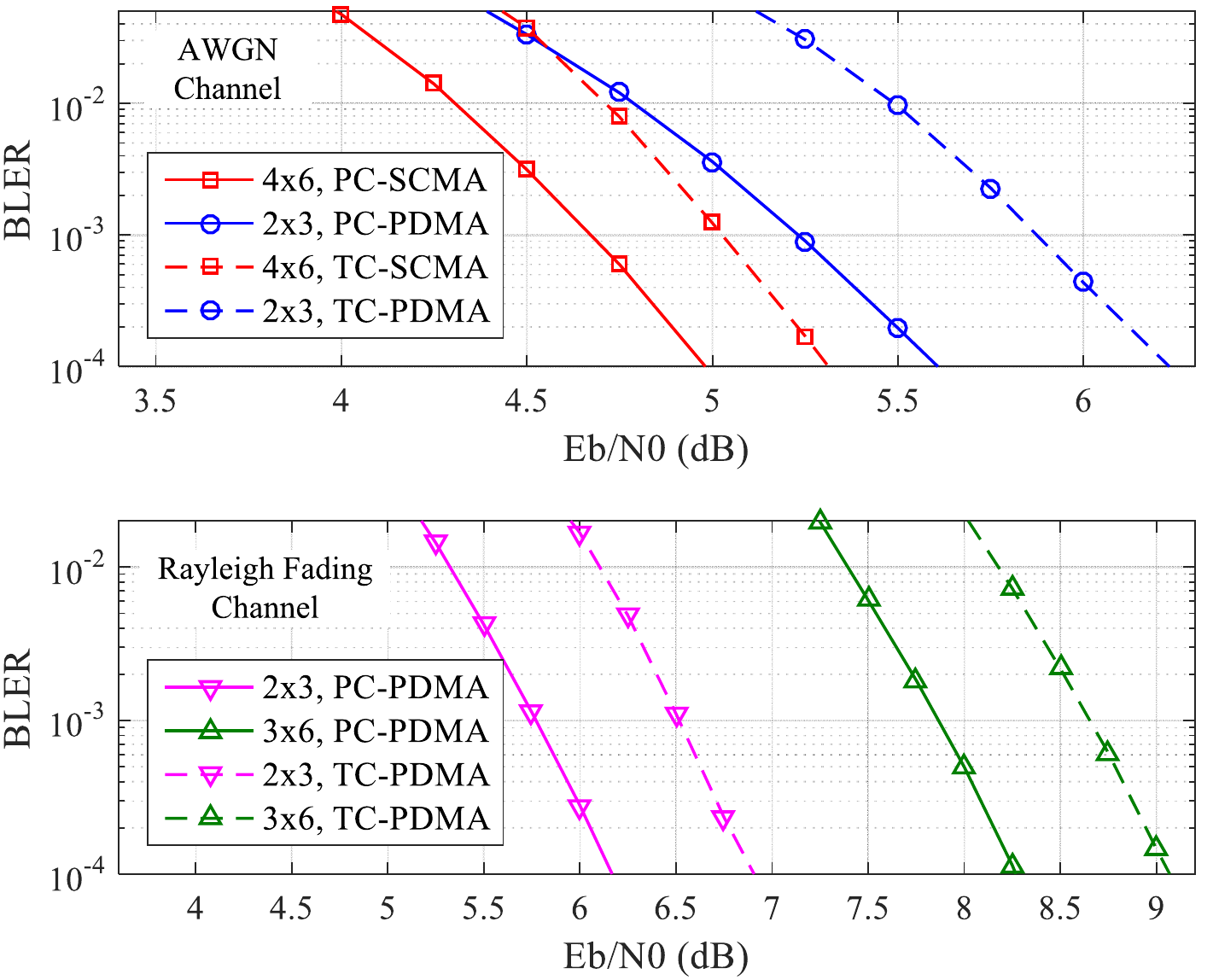}}
  \caption{The average BLER performance comparison under the AWGN and Rayleigh fading channels, where the PUP based PC-NOMA and the TC-NOMA w/o-oi are adopted.}\label{BLER3_fig}
\end{figure}

\section{Conclusion}

In this paper, we establish a framework for a joint design and optimization of the binary polar coding and the code-domain NOMA. To the best knowledge of the authors, this paper is the first one to investigate the PC-NOMA system. We extend the channel polarization idea to NOMA scenario, whereby the NOMA channel is decomposed into a series of binary-input channels under the three-stage channel transform. From the perspective of information theory, two practical PC-NOMA schemes are proposed. For the SUP based PC-NOMA scheme, we develop a new polar scheduling strategy to determine the NOMA detecting order so as to enhance the polarization effect among the user synthesized channels, which helps to improve the system performance. Then, the JSC detecting and decoding scheme is proposed to construct the multiuser receiver of the SUP based PC-NOMA. For the PUP based PC-NOMA scheme, the PSC detecting and decoding scheme is used to reduce the processing latency. The numerical and simulation results show the PC-NOMA schemes can better fit and exploit the irregular superposition structure of NOMA, which indeed indicates that the design of NOMA codebook (or superposition structure) and the channel coding scheme should be taken by a joint manner. This is our future research direction.

\end{document}